\documentclass[11pt]{article}

\usepackage{graphics}   %---Comment out if we don't want PS support
\usepackage{endnotes}            %---Comment out if we don't want to be able to change footnotes to endnotes
\usepackage{amsfonts}            %---Comment if we don't want AMS fonts
\usepackage{amssymb}             %---Comment if we don't want AMS symbols
\usepackage{amsthm}              %---Comment if we don't want fancy theorem environments
\usepackage{amsmath}            %---Comment if we don't want various AMS math goodies
\usepackage{algorithm}           %---Comment if we don't want algorithm support
\usepackage{algorithmic}
\usepackage[letterpaper]{geometry}%---Uncomment if we want to allow paper size/margin changes
\usepackage{rotating}            %---Uncomment if we don't want to be able to rotate floats
\usepackage{multibib}
\usepackage{xcolor}

\usepackage{setspace}                %---Comment to remove space-changing stuff
\usepackage{xspace}

\newcommand{\ilogit}{\ensuremath{\mathrm{logit}^{-1}}\xspace}

\usepackage{natbib}
\bibpunct{(}{)}{;}{a}{,}{;}

\begin{document}

%---Document title page starts here
\title{A Behavioral Micro-foundation for Cross-sectional Network Models\thanks{This research was supported by US ONR award \#N00014-08-1-1015, NIH award 1R01GM144964-01, and NSF award SES-2448652.  An earlier version of this paper was presented at the 2009 meeting of the American Sociological Association.}
}

%--Author(s), date, etc.
\author{Carter T. Butts\thanks{Departments of Sociology, Statistics, Computer Science, and EECS and Institute for Mathematical Behavioral Sciences; University of California, Irvine; SSPA 2145; Irvine, CA 92697; \texttt{buttsc@uci.edu}} \and
Alexander Murray-Watters\thanks{Department of Sociology; University of California, Irvine}
}
\date{{ }} %12/7/2022}
\maketitle

%---Document abstract and keywords
\begin{abstract}
Models for cross-sectional network data have become increasingly well-developed in recent decades, and are widely used.  This has led to a growing interest in the connection between such cross-sectional models and the behavioral processes from which the corresponding networks were presumably generated.  Here, we build on prior work in this area to present a behavioral micro-foundation for cross-sectional network models, based on a continuous time stochastic choice mechanism, that can accommodate highly general classes of cases (including agents who are not themselves in the network, and multilateral edge control).  As we show, the equilibrium behavior of this process under appropriate conditions can be expressed in exponential family form, allowing estimation of individual preferences using existing methods; the graph potential separates naturally into a preference-based term reflecting agent utilities, and an entropic term reflecting the rules of tie formation.  We illustrate our approach via an analysis of friendship in a professional organization, and modeling of phase transitions in the structure of small groups.\\[5pt]
%JMS version (120 words)
%Models for cross-sectional network data have become increasingly well-developed in recent decades, with a growing interest in the connection between such models and the behavioral processes from which networks are generated.  Building on work in this area, we present a behavioral micro-foundation for cross-sectional network models based on potential games that can accommodate factors such as agents who are not themselves in the network and multilateral edge control.  The equilibrium behavior of this process can be expressed in exponential family form, allowing estimation of individual preferences using existing methods; the graph potential separates naturally into a preference-based term reflecting agent utilities, and an entropic term reflecting the rules of tie formation.  Theoretical and empirical illustrations are also shown.

\emph{Keywords:} social networks, stochastic choice, exponential family random graphs, Markov chains
\end{abstract}

%---Definitions for Defs, Theorems, etc.
\theoremstyle{plain}                        %---Comment out this line if not using amsthm
\newtheorem{axiom}{Axiom}
\newtheorem{lemma}{Lemma}
\newtheorem{theorem}{Theorem}
\newtheorem{corollary}{Corollary}
\newtheorem{proposition}{Proposition}

\theoremstyle{definition}                 %---Comment out this line if not using amsthm
\newtheorem{definition}{Definition}
\newtheorem{hypothesis}{Hypothesis}
\newtheorem{conjecture}{Conjecture}
\newtheorem{example}{Example}

\theoremstyle{remark}                    %---Comment out this line if not using amsthm
\newtheorem{remark}{Remark}

%---Actual Text Begins Here

\section{Background and Introduction}

%network formation games

Models for cross-sectional network data have become increasingly well-developed in recent decades, particularly those based on the use of discrete exponential families \citep{lusher.et.al:bk:2012,schweinberger.et.al:ss:2020}.  Such exponential family random graph models (or ERGMs) are highly effective at tasks such as identifying and quantifying structural ``biases'' (i.e., systematic tendencies towards over or underrepresentation of particular types of substructures), hypothesis testing, imputation, and generation of novel structures with similar properties to observed networks; substantively, they have also shed light on questions such as the nature of triadic closure, the connections between dependence and structural bias, and phase behavior in social systems.  One particularly important line of inquiry attempts to connect these cross-sectional models with micro-level generative processes, providing a rigorous bridge between social dynamics and the structures they produce \citep{butts:jms:2024}.   Although explicitly dynamic models provide one response to this problem (see, e.g., \citet{snijders:jms:1996,robins.pattison:jms:2001,snijders:ch:2005}), these require longitudinal data which is frequently impossible or impractical to obtain.  Alternatively, it is in some cases possible to derive cross-sectional behavior from socially plausible micro-dynamics (see \citet{butts:jms:2024} for a review), offering potential interpretations for models estimated at the cross-sectional level, and potentially enabling predictions based on comparative statics (even where dynamics cannot be directly observed).  Recent work in this area \citep{butts:jms:2019,butts:jms:2022} has for instance shown that the scaling of mean degree and/or reciprocity observed in large networks can potentially be understood in terms of a hidden mobility process (in which individuals' quotidian movements between locations and social settings shape network structure by affecting opportunities for tie formation), and likewise that explanations for these phenomena based on degree constraint (as proposed e.g. by \citet{dunbar:jhe:1992}) lead to unrealistic consequences for network structure.  

In many social contexts, it is natural to consider micro-processes based on decision making (i.e., individual choices to pursue or discontinue relationships).  Such a direction was famously pursued by Snijders (\citeyear{snijders:jms:1996,snijders:sm:2001}), who introduced a modeling approach that combined choice-based dynamics inspired by the tradition of agent-based modeling \citep{carley:asr:1991,hummon.fararo:sn:1995,skvoretz.et.al:jms:1996} with statistical ideas.  In these models (often called SAOMs, or ``stochastic actor-oriented models''), ties are viewed as reflecting the choices of individuals within a network, who unilaterally make decisions to either send or remove ties to alters via a myopic stochastic choice process.  Although the bulk of this line of work deals with models for longitudinal data, \citet{snijders:sm:2001} also shows that, in some cases, it is possible to derive a cross-sectional ERGM form from a SAOM process - in particular, that if such a process is observed at a random time, the network will have an ERGM distribution that depends on actor utilities in a deterministic way.  Mele \citep{mele2017structural,mele2022structural,gaonkar2021model} further develops this theme, developing a framework where a subclass of ERGMs can be interpreted as a measurement of an agent-based process at equilibrium, where an edge connects two agents only when they mutually agree that it exists.  These results suggest considerable potential for further insights from generative choice models for group structure.

Progress to date on connecting choice processes with cross-sectional structure opens the possibility of further generalization.  Most current models assume unilateral edge control: i.e., ties are the direct result of the choice of a single individual, who can unilaterally determine whether a relationship is present or not.  For some types of relationships, this is arguably a reasonable approximation; e.g., a worker in an organization can unilaterally decide to seek advice from a potential alter, whether or not the alter consents.  In other settings, this approximation is more dubious.  For instance, mutual interaction in a cocktail party or other open setting can be initiated by either party (whether or not the other wishes to interact), friendship ties in the sense of \emph{philos} \citep{krackhardt:ch:1992} must be shared by both participants, and ties on many social media platforms require the assent of both parties.  Multilateral edge control is thus a feature of many real relations, motivating the development of schemes for representing networks in which relationships arise from multiple distinct decisions, with relational norms that govern how those decisions combine to determine relational status.  A second feature of current models is the identification of the decision makers who drive network structure with the nodes of the network itself: it is assumed that the network is comprised of agents, who make decisions regarding their own relationships with each other.  While this is an important special case, other types of settings also exist.  For instance, in an organization, managers or supervisors may be able to mandate (or possibly prohibit) certain kinds of interactions or relationships among their subordinates.  More subtly, nodes within the network may not be agents at all: children at play may argue over the fictive relationships among inanimate objects such as dolls, pundits and activists may spar over the ways in which political concepts should be connected within a discourse domain, and planning committees may haggle over ties within infrastructure networks.  Encompassing such scenarios requires the ability to separate the notion of the decision-making agent from the nodes within the network of interest, with overlap between the two sets being possible but non-essential.  Distinct agent/node sets further underscore the need to support complex patterns of multilateral edge control, since numerous parties could be involved in e.g. determining edge states in a negotiated network, with different parties weighing in on different sets of edges.

In the remainder of this paper, we provide a choice-based micro-dynamic foundation for certain classes of exponential family random graph models, that incorporates both multilateral edge control and agent/node distinctions.  Our approach builds on and generalizes prior work in the area, taking existing models as special cases.  In addition to providing a road to statistical inference for behavioral tendencies when dynamic information is unavailable, our framework also provides new theoretical insights into the ways that preferences and social norms can combine to shape network structure, and in some cases support comparative statics predictions for how structures would be expected to differ if the norms of tie formation were changed.  These results further contribute to recent work on the statistical mechanics of networks \citep[e.g,][]{robins.et.al:ajs:2005,butts:jms:2021,butts:jms:2024,diessner.et.al:jctc:2024} that separates the role of entropic effects and the social forces that influence network structure, demonstrating that tie formation norms lead to entropy terms while preferences behave analogously to energy terms in physical models.  We illustrate the framework by application to an empirical case of friendship ties among members of a law firm, and to more abstract models of phase changes in the structure of small groups.  

Before describing our choice-theoretic framework, we begin by establishing some of the core formalisms and notation required, and by reviewing relevant basics of exponential family random graph models.  We describe our specific development in Section~\ref{sec_choice}, followed by an illustrative empirical application in Section~\ref{sec_application} and an application to group structure in Section~\ref{sec_phase}.  Section~\ref{sec_conclusion} concludes the paper.

\subsection{Formal Preliminaries}

The networks that concern us here can be represented as simple directed or undirected graphs of fixed order.  We represent both by ordered pairs $G=(V,E)$, where $V$ is the set of \emph{vertices} and $E$ is a set of \emph{edges} on $V$.  Throughout, we shall use $n=|V|$ to refer to the order of the graph being described (where $|\cdot|$ denotes cardinality).  For simple graphs representing undirected relations, the elements of $E$ are two-element subsets of $V$; in the directed case, the elements of $E$ are ordered pairs of vertices.  Frequently, we will represent graphs via their adjacency matrices.  The \emph{adjacency matrix} of a graph, $G$, is the $n \times n$ binary matrix, $y$, such that $y_{ij}=1$ if $\{i,j\} \in E$ in the undirected case (or $(i,j) \in E$ in the directed case) and $y_{ij}=0$ otherwise.  Given a set of multiple graphs $G_1=(V,E_1),\ldots,G_m=(V,E_m)$ on common vertex set $V$, we can easily extend this notation by defining the corresponding \emph{adjacency array}, $y$, as the $m \times n \times n$ binary array such that $y_{ijk}=1$ if $\{j,k\} \in E_i$ ($(j,k) \in E_i$ in the directed case) and $y_{ijk}=0$ otherwise.  The adjacency matrix for the $i$th graph in such an array clearly corresponds to $y_{i\cdot\cdot}$, and is referred to as the $i$th \emph{slice} of $y$.

Adjacency matrices and arrays are particularly useful for describing \emph{random graphs}, i.e., random variables whose state space consists of a graph set.  We will here designate random graphs via their adjacency matrices, using capital letters to denote random variables and lower case letters to denote realizations.  Thus, a random graph $G$ may be represented by (random) adjacency matrix $Y$, within which the presence of an $i,j$ edge is denoted by the binary random variable (or \emph{edge variable}) $Y_{ij}$.  A corresponding realization of $G$ would then be represented by the adjacency matrix $y$, with the state of the $i,j$ edge indexed by $y_{ij}$.  This notation extends directly to adjacency arrays.  In some cases, we will also want to refer to edge variables within a random graph \emph{excluding} a particular edge variable, or with the states of particular variables held constant.  To this end, we define $Y^c_{ij}$ to be the set of all edge variables of $Y$ \emph{other than} $\{i,j\}$ (directed case, $(i,j)$); likewise, $y^c_{ij}$ refers to all elements of realization $y$ other than the $i,j$th.  A related situation arises when we wish to describe a graph structure in which a particular edge is forced to be either present or absent: $Y^+_{ij}$ and $Y^-_{ij}$ are respectively employed to denote random matrices with $Y^+_{ij}=1$ and $Y^-_{ij}=0$ and $Y^+_{kl}=Y^-_{kl}=Y_{kl}$ for $\{i,j\} \neq \{k,l\}$ (directed case, $(i,j) \neq (k,l)$).  Realization matrices $y^+_{ij}$ and $y^-_{ij}$ are defined in like manner.  

Finally, we note that when describing stochastic processes on graph sets (i.e., random graph processes), we will employ sequences of random adjacency matrices or arrays indexed by parenthetical superscripts (e.g., $Y^{(1)},Y^{(2)},\ldots$).  The same principle will be applied to realizations of such processes; hence, a realization of the above process might be denoted $y^{(1)},y^{(2)},\ldots$.  For processes on graph sets, we simply apply the same notation to the corresponding adjacency arrays.

\section{Cross-sectional Network Models}

Cross-sectional network models have been the subject of considerable research over the past several decades (see, e.g., \citet{holland.leinhardt:jasa:1981,frank.strauss:jasa:1986,wasserman.pattison:p:1996,pattison.robins:sm:2002,wasserman.robins:ch:2005,snijders.et.al:sm:2006,hunter.handcock:jcgs:2006,hunter.et.al:jcgs:2012,lusher.et.al:bk:2012,schweinberger.et.al:ss:2020} and related references), one result of which has been the emergence of a general family of formalisms for representing probability distributions on graphs.  Specifically, let $Y$ be the adjacency matrix of order-$n$ random graph $G$, and let $\mathcal{Y}_n$ be the support of $Y$.  We may then write the pmf of $Y$ in exponential family form as 
\begin{equation}
\Pr\left(Y=y\left|\rho\right.\right) = \frac{\exp\left(\rho\left(y\right)\right)}{\sum_{y' \in \mathcal{Y}_n} \exp\left(\rho\left(y'\right)\right)} I_{\mathcal{Y}_n}(y), \label{eq_erg}
\end{equation}
where $\rho: \mathcal{Y}_n \mapsto \mathbb{R}$ is a \emph{graph potential}, and $I_{\mathcal{Y}_n}$ is an indicator function for membership in $\mathcal{Y}_n$.  Typically, the graph potential is expressed in canonical form as $\rho\left(y\right)=\theta^T t\left(y\right) + \log h(y)$, with $t: \mathcal{Y}_n \mapsto \mathbb{R}^p$ being a vector of \emph{sufficient statistics}, $\theta \in \mathbb{R}^p$ a vector of \emph{parameters}, and $h:\mathcal{Y} \mapsto \mathbb{R}_{\ge 0}$ a \emph{reference measure} that controls the baseline behavior of the distribution.  It is common when working with single, unvalued social networks with fixed $V$ to take $h(y) \propto 1$, i.e. the counting measure on $\mathcal{Y}$, in which case $Y$ approaches the uniform distribution on $\mathcal{Y}$ as $\theta \to 0$ (where 0 should be understood to be the 0-vector).  However, the reference measure can in some cases play an important role in quantifying differences in opportunities for tie formation in the underlying generative process, which may motivate alternatives to the counting measure on theoretical grounds \citep{butts:jms:2019,butts:jms:2022}.  Below, we show that such non-trivial reference measures arise naturally in the context of multi-party edge control, with implications for the impact of alternative relational norms on social structure.

Although sometimes loosely described as a model in its own right, the framework of Equation~\ref{eq_erg} is more properly regarded as a general way of representing distributions on $\mathcal{Y}_n$; in particular, observe that by choosing $t$ to correspond to a $|\mathcal{Y}_n|$-dimensional vector of indicators for the state of $Y$, any pmf on a graph set of fixed order may be written in this fashion.  Notwithstanding the potential for confusion, it is both traditional and useful to refer to network models parameterized in exponential family form as \emph{exponential random graph models} (or ERGMs), and we follow this practice here.

While the above shows the single-network case, extensions to the multiple-network case are immediate.  Indeed, simply taking $Y$ and $\mathcal{Y}_n$ to refer to $m$-slice adjacency arrays allows Equation~\ref{eq_erg} to be applied to graph sets without additional difficulties.  General issues relating to parameterization of such ``multivariate'' exponential random graph (or MERG) models were extensively elucidated by \citet{wasserman.pattison:p:1996,pattison.wasserman:bjmsp:1999} and \citet{robins.et.al:p:1999}, although their use dates back to the original work by \citet{holland.leinhardt:jasa:1981}.  

The attractiveness of ERGMs as a general framework for cross-sectional network modeling stems from several sources.  As already noted, the ERGM formalism is \emph{complete} on the set of finite order graphs and graph sets, in the sense that all distributions for such entities can be written (albeit not always parsimoniously) in ERGM form.  As special cases of the more general discrete exponential families \citep{brown:bk:1986,barndorffnielsen:bk:1978}, considerable inferential theory exists for ERGMs (see, e.g., \citet{hunter.et.al:jcgs:2012,lusher.et.al:bk:2012,schweinberger.et.al:ss:2020} for reviews); thus, fairly general methods of parameter estimation, hypothesis testing, model selection, and adequacy evaluation exist for models parameterized in this form.  Similarly, general algorithms for simulation of ERGMs exist (e.g., \citet{snijders:joss:2002,hunter.et.al:jss:2008}) based on Markov chain Monte Carlo (MCMC) methods, as well as more specialized exact \citep{butts:jms:2018} and approximate \citep{butts:jms:2015} samplers.  Finally, the ERGM form is well-suited to parameterization based on \emph{dependence hypotheses}, a notion from spatial statistics \citep{besag:jrssB:1974} which can be adapted to produce families of network models satisfying particular assumptions regarding the interdependence of edges (see, e.g., \citet{frank.strauss:jasa:1986,pattison.robins:sm:2002}).  

As these observations suggest, the ERGM framework acts as a sort of ``lingua franca'' for cross-sectional network modeling: models translated into ERGM form can immediately exploit a broad family of inferential and simulation tools, and can moreover be compared and even combined on a principled basis.  While the creation and elaboration of this framework has represented a tremendous advance, it cannot immediately address the \emph{content} of network models -- this last must necessarily come from a combination of empirical investigation, prior substantive knowledge regarding the structures under study, and an enumeration of modeling objectives.  Ideally, one would like to be able to derive ERGM forms from first principles, based on underlying substantive theory regarding the generative processes governing the network under study.  Encouraging progress has been made in this direction.  Apart from the dependence processes and work on reference measures cited above, there has been some success in linking ERGMs to both choice-theoretic \citep{snijders:sm:2001,mele2017structural} and physical \citep{grazioli.et.al:jpcB:2019,diessner.et.al:jpcB:2023,diessner.et.al:jctc:2024} processes.  The major portion of the text that follows is concerned with this problem.  Before turning to this, however, we review several other formal ideas required for our development.  Readers already familiar with ERGMs and related Markov chain Monte Carlo methods may wish to proceed to Sec.~\ref{sec_choice}, where we discuss our choice-theoretic formulation.

\subsection{Potentials, Conditional Probabilities, and ERGMs}

Equation~\ref{eq_erg} specifies the joint probability of an entire graph (or set thereof) as a single observation.  Direct calculation of this quantity is generally infeasible, due to the normalizing factor $Z\left(\rho,\mathcal{Y}_n\right)=\sum_{y' \in \mathcal{Y}_n} \exp\left(\rho\left(y'\right)\right)$.  $Z$ corresponds directly to the \emph{partition function} of statistical mechanics, and (through its derivatives) implicitly contains a wealth of information regarding the behavior of the corresponding model.  (See \citet{strauss:siam:1986,butts:ch:2007} for related discussion.)  Regrettably, $Z$ has closed-form expressions for only a handful of models, and is generally incomputable due to the large number of elements in the corresponding sum (and the roughness of the underlying function).  Various computational strategies are employed to deal with this issue \citep{hunter.et.al:jcgs:2012,schweinberger.et.al:ss:2020,krivitsky.et.al:jss:2023}.  However, the problem of normalization does motivate the general interest in conditional probabilities and probability ratios vis a vis models in ERGM form.  For instance, let $y,y'\in \mathcal{Y}_n$ be elements of the support of $Y$.  Then, from Equation~\ref{eq_erg},
\begin{align}
\frac{\Pr\left(Y=y'\left|\rho \right.\right)}{\Pr\left(Y=y\left|\rho \right.\right)} &= \frac{\exp\left(\rho\left(y'\right)\right)}{\sum_{y'' \in \mathcal{Y}_n} \exp\left(\rho\left(y''\right)\right)} \frac{\sum_{y'' \in \mathcal{Y}_n} \exp\left(\rho\left(y''\right)\right)}{\exp\left(\rho\left(y\right)\right)}\\
&=\exp\left(\rho\left(y\right)-\rho\left(y\right)\right). \label{e_ergrat}
\end{align}
Thus, the log-odds of $y'$ versus $y$ reduces to their difference in potentials, with $Z$ falling out of the equation.  This observation leads immediately to an expression for the conditional probability of a single edge, given the rest of the graph:
\begin{align}
\frac{\Pr\left(Y=y^+_{ij}\left|y^c_{ij},\rho \right.\right)}{\Pr\left(Y=y^-_{ij}\left|y^c_{ij},\rho \right.\right)} &= \exp\left(\rho\left(y^+_{ij}\right)-\rho\left(y^-_{ij}\right)\right)\\
\frac{\Pr\left(Y=y^+_{ij}\left|y^c_{ij},\rho \right.\right)}{1-\Pr\left(Y=y^+_{ij}\left|y^c_{ij},\rho \right.\right)} &= \exp\left(\rho\left(y^+_{ij}\right)-\rho\left(y^-_{ij}\right)\right)\\
\Pr\left(Y=y^+_{ij}\left|y^c_{ij},\rho \right.\right) &= \frac{\exp\left(\rho\left(y^+_{ij}\right)-\rho\left(y^-_{ij}\right)\right)}{1+\exp\left(\rho\left(y^+_{ij}\right)-\rho\left(y^-_{ij}\right)\right)}\\
&=\frac{1}{1+\exp\left(\rho\left(y^-_{ij}\right)-\rho\left(y^+_{ij}\right)\right)}\\
&=\ilogit\left(\rho\left(y^+_{ij}\right)-\rho\left(y^-_{ij}\right)\right). \label{e_ergcond}
\end{align}
Here again, we are left with a simple expression in terms of potential differences -- in this case, the conditional probability of an edge is seen to be the inverse logit of the change in potential associated with that edge (holding the rest of the graph constant).  This is, of course, the basis for the well-known conditional logistic interpretation of the ERGM, and for the standard ERGM pseudo-likelihood estimator \citep{frank.strauss:jasa:1986,strauss.ikeda:jasa:1990}.  Our use for the expression is rather different: as we shall see, it forms an important element in our linkage of behavioral choice models with network structure.  Before coming to this, however, we round out this section of the paper by reviewing the connection between cross-sectional ERGMs and a basic family of generative processes.

\subsection{Generative Processes}

Although we have thus far discussed ERGMs entirely within a cross-sectional framework, they are also associated with various generative processes (see \citet{butts:jms:2024} for a review).  Most generically, it is obviously the case that a temporal aggregation or cross-section of any fixed-order network process can described in the form of Equation~\ref{eq_erg}; this is a trivial consequence of the already-noted fact that all distributions on $\mathcal{Y}_n$ can be written in this form.  Such an observation is not especially helpful in linking ERGMs with generative processes, however, since it provides no indication of how the appropriate potential, $\rho$, might be found.  Moreover, graph potentials for aggregations or cross-sections of dynamic processes will not in general be invariant to sampling time, length of aggregation window, or even the initial conditions of the underlying process \citep{lerner.et.al:jmp:2013}.  Thus, one cannot hope to find a completely generic way to map generative processes onto ERGMs (much less the reverse).

Despite this caveat, there are certain families of generative processes that do give rise to cross-sectional distributions in a predictable manner.  Typically, these processes have been studied in the context of statistical simulation of ERGM draws, e.g. in support of inference or model assessment (e.g. \citet{crouch.et.al:pres:1998,snijders:joss:2002,hunter.et.al:jss:2008}), with a focus on properties relating directly to such goals (e.g. simulation accuracy and efficiency).  Our interest is somewhat different: we will ultimately use the generative process to link micro-level behavior with cross-sectional network structure.  Nevertheless, it happens that we can exploit this pre-existing work in furthering our theoretical objectives.  To foreshadow, these and closely related developments have been employed by e.g. \citet{snijders:sm:2001}, \citet{mele2017structural}, and \citet{koskinen.lomi:jsp:2013} to obtain cross-sectional distributions from micro-level behavioral processes, an approach we partially recapitulate here.

We start by invoking a particularly simple and well-known process, namely the single-update Gibbs sampler on graphs \citep{snijders:joss:2002}.  Let us begin by positing some support for a process with cross-section $Y$, $\mathcal{Y}_n$, from which some initial graph $Y^{(0)}$ is drawn from an ERG with arbitrary potential $\rho_0$.  Let $E^*(\mathcal{Y}_n)$ denote the set of edge variables of $Y$.  We now form the sequence $Y^{(1)},Y^{(2)},\ldots$ via the following iterative procedure:
\begin{enumerate}
\item At the $i$th iteration, draw an edge variable $Y_{jk}$ from a variable selection process $S_i$ on $E^*(\mathcal{Y}_n)$;
\item With probability $Pr\left(Y^{(i-1)=}\left(y^+_{jk}\right)^{(i-1)}\left|\left(Y^c_{jk}\right)^{(i-1)}=\left(y^c_{jk}\right)^{(i-1)},\rho\right.\right)$, let $Y^{(i)}=\left(Y^+_{jk}\right)^{(i-1)}$.  Otherwise, let $Y^{(i)}=\left(Y^-_{jk}\right)^{(i-1)}$.
\end{enumerate}
The following is then a standard result (e.g., \citet{gilks.et.al:ch:1996}):
\begin{proposition} \label{prop_gibbs}
Let $\rho$ be a finite potential on $\mathcal{Y}_n$, and let $S_i$ be a process on $E^*(\mathcal{Y}_n)$ such that
\begin{enumerate}
\item $S_i$ is independent of $Y^{(1)},Y^{(2)},\ldots,Y^{(i)}$; and
\item $\sum_{l=1}^i I(S_l=Y_{jk}) \to \infty$ as $i \to \infty$ almost surely, for all $j,k$ such that $Y_{jk} \in E^*(\mathcal{Y}_n)$.
\end{enumerate}
Then the sequence $Y^{(0)},Y^{(2)},\ldots,Y^{(i)}$ approaches ERG$(\rho)$ in distribution as $i\to \infty$.
\end{proposition}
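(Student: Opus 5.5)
The plan is to treat the update sequence as a time-inhomogeneous Markov chain on the finite set $\mathcal{Y}_n$. Conditional on the realized selection sequence $s_1,s_2,\ldots$ (legitimate because of condition 1), the $i$th step applies a kernel $K_{s_i}$, the single-site Gibbs kernel for edge variable $s_i$. Three facts about these kernels do most of the work. First, each $K_e$ leaves ERG$(\rho)$ invariant. This follows from Equation~\ref{e_ergcond}: $K_e$ resamples $Y_e$ from its exact conditional distribution given $Y^c_e$, and that conditional is the ERG$(\rho)$ one. Second, each $K_e$ is idempotent, since resampling the same variable twice in a row is the same as resampling it once. Third, because $\rho$ is finite, every conditional probability $\ilogit(\cdot)$ lies in $[\epsilon,1-\epsilon]$ for some $\epsilon>0$ that is uniform over $\mathcal{Y}_n$ and all edges. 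This holds because the set of potential differences is finite.

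The next step is a block decomposition. Condition 2 says every edge variable is selected infinitely often almost surely. So I will partition time into consecutive blocks $B_1,B_2,\ldots$, each of finite length and each containing at least one selection of every one of the $N=|E^*(\mathcal{Y}_n)|$ edge variables. Let $P_r$ be the composite kernel over block $B_r$. I claim that $P_r(y,y')\ge \epsilon^{N}$ for all $y,y'$. To see this, restrict attention to the last visit to each edge variable within the block. At each such visit the kernel sets that variable to its target value $y'_e$ with probability at least $\epsilon$. At every other visit we take whichever outcome occurs, which has total probability $1$. Later visits to other edges never touch $e$ again, so after the block the state equals $y'$ with probability at least $\epsilon^N$. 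Hence $P_r$ has Dobrushin coefficient $\delta(P_r)\le 1-|\mathcal{Y}_n|\epsilon^N<1$.

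Now combine these. Every $P_r$ preserves $\pi=$ ERG$(\rho)$, so for any initial law $\mu_0$ (here ERG$(\rho_0)$) we get
\[
\|\mu_0 P_1\cdots P_r-\pi\|_{TV}\le \prod_{q=1}^r\delta(P_q)\,\|\mu_0-\pi\|_{TV}\to 0 .
\]
Steps that fall inside an unfinished block are themselves $\pi$-invariant kernels, and such kernels are TV-contractions, so the distance is nonincreasing between block boundaries. Convergence along the block boundaries therefore gives convergence for all $i$. Finally, the convergence holds conditionally on almost every selection sequence. Averaging over $S$ with dominated convergence, again using the independence in condition 1, yields convergence of the unconditional law of $Y^{(i)}$ to ERG$(\rho)$.

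The main obstacle is making the conditioning on the selection process rigorous. Condition 1 as stated only requires independence from the past states. I will read it as saying that $S$ can be generated independently of the Gibbs coin flips, so that conditioning on the entire selection path leaves the update kernels unchanged. If $S_i$ is only independent of $Y^{(1)},\ldots,Y^{(i)}$ and not of the coin flips, then the same argument still works with $K_{S_i}$ treated as a kernel applied after $S_i$ is revealed. The block boundaries then become stopping times, and the minorization bound holds conditionally on the history. That version requires a little more care with filtrations.
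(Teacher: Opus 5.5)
Your proof is correct, and it takes a different and more complete route than the paper. The paper gives no proof of Proposition~\ref{prop_gibbs}. It calls the result a special case of the random-scan Gibbs sampler, cites Gilks et al., and sketches the argument: the procedure is a Markov chain on $\mathcal{Y}_n$, irreducible and aperiodic (edge conditionals are bounded away from $0$ and $1$), with ERG$(\rho)$ invariant, so the ergodic theorem for finite chains applies. Taken literally, that argument covers only time-homogeneous scans: i.i.d.\ selection with positive probabilities, or a fixed systematic sweep treated as one kernel. The proposition allows arbitrary, irregular selection processes, and Theorem~\ref{t_eq} actually needs them, since there $S$ is the embedded sequence of an arbitrary exogenous updating process. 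For such $S$ the state sequence is a time-inhomogeneous chain, or not Markov at all if $S$ has memory, so ``irreducible and aperiodic'' is not the right notion. Your ingredients handle exactly this generality: conditioning on the selection path, $\pi$-invariance of each $K_e$, the last-visit minorization $P_r(y,y')\ge\epsilon^N$ over covering blocks, and the resulting uniform contraction $\delta(P_r)\le 1-(2\epsilon)^N$. They also give an explicit rate in the number of completed covering blocks. Finally, they make precise that condition~1 must mean exogeneity of $S$ relative to $Y^{(0)}$ and the update coins. Read literally, independence of $S_i$ from $Y^{(i)}$ as well forces every $S_i$ with $i\ge 2$ to be a.s.\ constant, because a change at coordinate $e$ reveals $S_i=e$; your argument covers that deterministic-scan case too. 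Both routes tacitly take $\mathcal{Y}_n$ to be the full product set, so every single-edge toggle stays in the support. The idempotence of $K_e$ is never used.

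The one weak spot is your closing fallback. If selections may react to earlier coin outcomes, then which visit to an edge is the last one in its block is not known when that visit happens. So the pathwise last-visit bound does not transfer to stopping-time blocks as you state it. Some exogeneity is essential: state-dependent scans can satisfy condition~2 and still converge to the wrong law. If you want a weaker hypothesis, a clean one is that $S_i$ is independent of $Y^{(i-1)}$ and the coins are fresh. Then:
\begin{itemize}
\item the marginal laws satisfy $\mu_i=\mu_{i-1}\sum_e \Pr(S_i=e)K_e$;
\item condition~2 forces $\sum_i \Pr(S_i=e)=\infty$ for every $e$, by the first Borel--Cantelli lemma;
\item an independent exogenous scan with the same one-step marginals has the same marginal laws and hits every edge infinitely often, by the second Borel--Cantelli lemma.
\end{itemize}
That reduces the weaker case to your main one. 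Otherwise, simply drop the fallback sentence.
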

As this is a special case of the well-known random scan Gibbs sampler, we do not provide a proof of this proposition here.  Interested parties are referred to the above review chapter and included references for details.

Proposition~\ref{prop_gibbs} holds because the iterative procedure forms a Markov chain on the elements of $\mathcal{Y}_n$ whose equilibrium distribution is ERG$(\rho)$.  Irreducibility and aperiodicity of the chain stems from the fact that the conditional probability of setting an edge to be present or absent is bounded away from 0 or 1 (itself a consequence of the finiteness of $\rho$ combined with Equation~\ref{e_ergcond}), along with the properties of $S$; that the resulting equilibrium distribution is the correct one is less obvious, but is a known result for chains of this type.

For our purposes, several points bear emphasis.  First, the above procedure provides an example of a generative process based on local structural adjustments, which ultimately gives rise to a known cross-sectional distribution.  Second, the local adjustments on which the process is based are Bernoulli trials, whose probabilities are inversely logistic in the potential difference between successive graphs.   Third, the selection of the initial graph state is wholly arbitrary, in the sense that it does not impact the equilibrium distribution of the process.  Finally, the asymptotic cross-sectional properties of the graph process do not depend upon the details of the process by which edge variables are selected for updating -- so long as the specified conditions are met, edge variables may be updated in any order (and, indeed, irregularly) without affecting the long-run behavior of the chain.  Taken together, these properties suggest an interesting possibility: if it were possible to map a plausible behavioral process to the sampling algorithm, then one could in principle use its equilibrium distribution to predict the long-run cross-sectional behavior of the corresponding social system.  This has turned out to be a fruitful notion in work on network microprocesses \citep{butts:jms:2024}, which we exploit in the following section.

\section{Cross-sectional Distributions from Stochastic Choice} \label{sec_choice}

In the previous section of this paper we focused on the definition of cross-sectional network models, closing with the description of a simple family of stochastic processes whose equilibria correspond to specified ERGM distributions.  In this section, we build on these results by demonstrating a family of behaviorally reasonable micro-processes that belong to the above class, and specify sufficient conditions for relating their resulting equilibria to individual preferences.  We begin by describing a simple stochastic choice process for social networks, subsequently showing how this process can lead to a specified equilibrium distribution in the case of unilateral edge control (i.e., relations for which any given edge is controlled by a single party).  Next, we generalize this to the more complex case of bilateral edge control (i.e., relations for which any given edge is jointly controlled by two parties), closing the section with a framework for the general multilateral case.  These results build on past work on networks with unilateral edge control \citep{snijders:sm:2001,koskinen.lomi:jsp:2013} and bilaterial edge control with mutual assent \citep{mele2017structural,mele2022structural,gaonkar2021model}, allowing for general edge control regimes in which the agents who govern the network may or may not be synonymous with its vertices.  Relatedly, we also introduce a system of context-agnostic nomenclature to describe these various cases without tacitly assuming properties of the agents or of relational content.

\subsection{Agent Choice Framework}

The core of our behavioral micro-process is a stochastic choice model, through which individuals are presumed to make decisions that impact (but may or may not entirely determine) a set of relationships.  We begin by assuming a \emph{manifest relation}, $Y$, on some fixed vertex set $V$.  The edges of $Y$ are presumed to arise from the actions of one or more \emph{agents,} who are collectively denoted by the set $A$.  It is important to note that the set of agents need not be the same as (nor even have a non-empty intersection with) the set of vertices, although $V=A$ is a useful special case.  The connection between agents and the edges they affect is formalized via a set of \emph{control lists}, which specify the agents whose behaviors govern any given edge.  Specifically, for each edge variable $Y_{ij}$, we define the corresponding control list $c_{ij}\subseteq A$ to be the minimum lexically ordered $\ell$-tuple of agents whose behaviors determine the state of $Y_{ij}$.  (The assumption of lexical ordering is not technically necessary, but considerably simplifies subsequent notation.)  The control list length, $\ell$, is referred to as the \emph{control number} for $Y$, and is of considerable substantive importance.  Networks for which $\ell=1$ are said to exhibit \emph{unilateral edge control}, as the state of each edge is determined by the behavior of a single agent.  Similarly, networks with $\ell=2$ are said to exhibit \emph{bilateral edge control}, with $\ell>2$ reflecting the more general multilateral case.  Particular issues arise in each of these settings, and we therefore treat them in greater detail below.  In passing, we note that one can in principle posit cases of \emph{mixed control}, in which $\ell$ varies across edge variables.  Although we here limit ourselves to the ``pure'' cases in which $\ell$ is constant, our framework generalizes to mixed control in a fairly natural way.

Given an agent set and associated control structure, we must next posit a framework for representing agent choices regarding his or her associated edges.  Here, we propose to do this via a set of $\ell$ latent graphs, each of which encodes the current decision of a set of agents regarding the edges of $Y$.  Specifically, we define the \emph{prosphoric array} for $Y$ to be an $\ell \times N \times N$ adjacency array, $P$, whose $i,j,k$ cell contains the current decision state of the $i$th agent in $c_{jk}$ regarding $Y_{jk}$.  The manner in which the cells $P_{\cdot jk}$ determine $Y_{jk}$ -- and hence the behavioral interpretation of the associated ``decision'' -- is relation-specific, and will be discussed in detail below.  Intuitively, however, it is often useful to think of the prosphoric array as representing ``offers'' or ``overtures'' for the creation or maintenance of ties, whose appropriate coincidence (depending on both $\ell$ and on the properties of the relation) may lead to the presence of the associated manifest relation.  While this analogy is the basis for the terminology employed here (``prosphoric'' from the Greek for ``offer'' or ``overture''), it should be stressed that we do not presume that such sub-relational interactions are actually taking place; rather, we consider the prosphoric array as a formal device for aggregating agent choices, which may or may not have any direct relationship to the physical or social details of the interactive process through which the ties of $Y$ are actually created.  In particular, it should be noted that the slice structure of $P$ is determined by the lexical ordering of $A$, and is thus semi-arbitrary.  That said, the joint structure of $P$ \emph{is} meaningful in terms of its relationship to the manifest relation, $Y$, and it is to this that we now turn.

\subsubsection{Unilateral Edge Control}

To express the relationship between individual decisions and the resultant graph structure, we posit a \emph{resolution function}, $r$, which takes the set of possible prosphoric arrays into $\mathcal{Y}_n$ (the set of possible manifest networks).  In the simplest case of unilateral edge control (i.e., $\ell=1$), $r$ is simply the identity function: that is, $Y_{jk}=P_{1jk}$.  As this implies, a manifest edge under unilateral control exists if and only if its controlling agent chooses it, and hence the meaning of a decision within the prosphoric array is simply that of whether or not to maintain the corresponding relationship.

Unilateral edge control is an obvious model for directed relations such as advice seeking, in which each individual has direct control over his or her outgoing ties.  In this case (studied e.g. by \citet{snijders:sm:2001}) it is reasonable to let $V=A$, and to set $c_{ij}=(j)$ for all $(i,j)$ pairs.  While such examples are easy to come by, they are not the only phenomena which can be modeled in this way.  For instance, consider an organizational design problem, in which managers within different divisions of an organization have discretion to define the formal reporting relationships for their own divisions (with the CEO defining reporting relationships among the division heads).  In this case, $A \subset V$, and $c_{ij}$ for a given relationship would be the identity of the controlling manager.  One can easily conceive of related applications on undirected relations (e.g., design of communication channels rather than reporting relations), relations for which the set of agents is not contained within the set of vertices (e.g., design of physical systems, such as road networks), etc.  What all of these cases have in common is that edges are contingent upon the behavior of a single decision maker, and that the corresponding decision is simply one of whether or not to maintain the edge in question.

\subsubsection{Bilateral Edge Control}

Although many relations are unilateral, there are likewise many examples of relations that depend upon the actions of two parties.  In such cases, $\ell=2$, and more than one substantively distinct mapping from $P$ to $Y$ is possible.  Of the possible variants, two forms are of clear sociological interest and are discussed in detail here.  First, we have the case in which a manifest edge exists if and only if both controlling parties agree to it; we refer to these as \emph{symphonic relations} (from ``symfonia,'' or agreement), and define the corresponding resolution function as $r(P)_{jk}=P_{1jk}P_{2jk}$.  The second case is that in which a manifest edge exists if and only if \emph{some} controlling party agrees to it.  We refer to such relations as \emph{epibolic} (from ``epiboli,'' or imposition), expressing the intuition that either party can ``impose'' the relationship upon the other.  The resolution function for an epibolic relation is $r(P)_{jk}=1-\left(1-P_{1jk}\right)\left(1-P_{2jk}\right)$, i.e., a logical OR on the elements of $P_{\cdot jk}$.  As these examples make clear, we can usefully view the resolution function as describing the \emph{relational norms} that determine how relationships are defined within the setting of interest.  Although one can identify other types of bilateral relational norms beyond these two (e.g., a logical XOR), symphonic and epibolic relations would seem to constitute the bulk of those usually encountered in sociological settings.  Given this, we take a moment to consider each in turn before proceeding to the case of general multilateral relations.

\paragraph{Symphonic Relations}

As with unilateral relations, bilateral symphonic relations are obvious models for a wide array of interpersonal networks.  For instance, many notions of friendship (e.g., the \emph{philos} of \citet{krackhardt:ch:1992}) are defined in such a way as to require mutual assent, and mutual agreement is of practical necessity for many forms of collaboration (e.g., in task performance following disasters% \citep{butts.et.al:joss:2009}
).  The typical scenario for such relations is an undirected graph, $Y$, with $A=V$ and $c_{jk}=(j,k)$ (where $j<k$ in lexical order).  Although the above gives the impression that symphonic relations occur only in cases of positive interaction, this is not the case.  Consider, for instance, the interaction between an infrastructure designer or maintainer (e.g., a civil authority) and an adversary (e.g., a hostile military force) in the context of a transportation network.  For an edge (e.g., a roadway) to be passable, it must both be built/maintained by the infrastructure designer \emph{and} allowed to operate by the adversary.  Obviously, the designer and adversary have very different goals, and their interaction is strongly negative.  Nevertheless, such a situation can be treated as a symphonic relation with $A$ consisting only of the designer and adversary, and $c_{jk}$ being the ordered elements of $A$ for all $j,k$.  

As with the unilateral case, there is no necessary correspondence between the resolution mechanism and the directedness of the edges of $Y$, the positivity/negativity of the relation, nor the elements of $A$.  Modeling $Y$ as symphonic does, however, fix the meaning of the decision $P_{ijk}=1$ to be ``the $i$th controlling party of the $j,k$ edge chooses to implement the edge, \emph{if} the other party concurs.''  So long as one bears in mind that ``concurrence'' may be tacit (and need not connote a positive-valence interaction among the controlling parties), this provides a clear sense of the decision in question.

\paragraph{Epibolic Relations}

If friendship is an obvious example of a symphonic relation, interpersonal contact is a similarly common epibolic example.  Contact is inherently symmetric, and may be instigated by either party to the relation; it is thus natural to model it by letting $A=V$, and $c_{jk}=(j,k)$.  Interestingly, it may be appropriate to model the same underlying relationship as epibolic or unilateral, depending on what is assumed to drive the behaviors of the participants.  For instance, an advising relation generally involves a two-way exchange of information, even if one party is the initiator of the interaction.  If this exchange \emph{per se} were conjectured to drive agent behavior (and not the properties of the unilateral advice-seeking edge), then it would be more appropriate to treat the interaction as undirected and epibolic rather than directed and unilateral.  (Use of model comparison techniques to empirically disambiguate such cases is in principle possible, although this will not be treated here.)  While contact and advice are directly interpersonal, one can naturally find epibolic relations in other contexts.  Returning to our earlier transportation example, we could for instance consider a scenario in which two authorities (e.g., governments of neighboring city-states) have the capacity to build road segments within the same area (e.g., neutrally held territory adjacent to both).  Since a road segment exists if implemented by either party (in the absence of malfeasance), the road network as a whole is an epibolic relation.  The agents in this case are the two government authorities, and the control list for all edges is simply these two agents in lexical order.

The meaning of the decision $P_{ijk}=1$ in the epibolic case is straightforwardly implied by the above examples: ``the $i$th controlling party of the $j,k$ edge chooses to implement the edge, \emph{regardless} of whether the other party concurs.''  This is obviously close to the unilateral case, save in that neither party can prevent the edge from being established by the other.  As before, it should be stressed that this is independent of the valance of $Y$, and could reflect either tacit or explicit sub-relational interaction.

\subsubsection{General Multilateral Edge Control}

Although most studied relations fall into the unilateral or bilateral cases, there is in principle no limit to the number of agents who may be involved in edge creation.  For instance, \citet{butts.carley:jms:2007} discuss games in which a committee of designers attempts to negotiate the structure of an organization; although their case is concerned with dynamics (i.e., plans for changing an organization over time), similar interactions may occur in negotiations over structure per se.  Models for hypergraphs (which can be represented two-mode bipartite graphs, with appropriate restrictions on $\mathcal{Y}_n$) may also suggest a more general multilateral control structure.

In such cases, the number of possible resolution mechanisms is obviously quite large, and we do not attempt to enumerate them here.  Symphonic and epibolic relations generalize naturally to the general context, with respective $R$ functions $r(P)_{jk}=\prod_{i=1}^\ell P_{ijk}$ and $r(P)_{jk}=1-\prod_{i=1}^\ell \left(1-P_{ijk}\right)$.  Other potential rules include thresholding mechanisms (i.e., some critical number $k$ of controlling parties must choose an edge to activate it), which takes the ``democratic'' process of majority rule as a special case.  As before, choice of $r$ is a substantive modeling decision which indirectly determines the behavioral meaning of the prosphoric array, and as such should be guided by prior knowledge regarding the social system in question.

\subsection{Edge Updating Events, and the Decision Model}

Having constructed a framework for representing agent decisions regarding edges within a network, and having discussed the mechanisms by which those decisions are transformed into manifest ties, we are now in a position to discuss the choice process itself.  As with prior work on stochastic choice models for social networks, our approach is to model agents as boundedly rational decision makers, who myopically adjust their relationships so as to (on average) increase their utilities for the resulting manifest network.   Adjustments are assumed to occur episodically in continuous time, such that adjustment opportunities (though not realized adjustments) are independent of the current network state.  As we will subsequently show, this process (for certain families of utility functions) leads to cross-sectional behavior which can be directly parameterized in terms of individual utility functions.

\subsubsection{Edge Updating Process}

As noted above, we assume that relationships within $Y$ are persistent (as opposed to instantaneous; see \citet{butts:sm:2008} for a discussion), with changes occurring in continuous time.  Changes occur when provoked by agents' decisions (within $P$), as determined by the resolution function, $r$, encoding relational norms.  As cognitively bounded individuals, we presume that agents are not capable of evaluating all controlled edges simultaneously, nor at all times.  Rather, we presume that an unobserved saliency-governing process episodically leads each agent to evaluate his or her decision regarding a particular edge, at which time he or she revisits the decision within the current structural context (as described below).  Should this evaluation lead to a state change, $P$ (and, if necessary, $Y$) is immediately updated, and all decisions are then fixed until the next opportunity arises.  

We here refer to such ``opportunities'' as \emph{edge updating events}, and the process which generates them as an \emph{edge updating process}.  Formally, we define this process, $X^{(1)},X^{(2)},\ldots$, as a sequence of random $(j,k,l,t)$ tuples, where $a(X^{(i)})=j$ is the updating agent, $e_s(X^{(i)})=k$ is the sender of the hypothetical edge being updated, $e_r(X^{(i)})=l$ is the receiver of the hypothetical edge, and $\tau(X^{(i)})=t$ is the time at which the updating event occurs.  Like $P$, we consider $X$ to be fully latent, and we specify only that it satisfy the following two conditions.  First, we require $X$ to be independent of $P$ (and hence of $Y$) -- change opportunities (but not realized changes) do not depend on the state of the network.  Second, we require that all elements of $P$ are updated at least occasionally, in the limited sense that $\sum_{x: \tau(x)<t} I\left(a\left(x\right)=i,e_s(x)=j,e_r(x)=k\right) \to \infty$ as $t\to\infty$ a.s. for all $\{j,k\}$ (directed case $(j,k)$) in $E^*(\mathcal{Y}_n)$ and all $i \in c_{jk}$.  Thus, in the limit, all edges will be updated infinitely many times, though this process need not happen in any particular order.  While very weak, we will see that these conditions are sufficient to yield the desired result. 

\subsubsection{Stochastic Choice Mechanism}

When an edge updating event $X^{(i)}=(j,k,l,t)$ occurs, the evaluating agent $a=c_{klj}$ revisits his or her decision regarding the state of the $j,k$ edge (i.e., the value of $P_{jkl}$).  Let $P^{(i-1)}$ be the current state of the prosphoric array; $a$ must then choose between $P^{(i)}=\left(P^{(i-1)}\right)^+_{jkl}$ and $P^{(i)}=\left(P^{(i-1)}\right)^-_{jkl}$.  We here assume that $a$'s preferences can be expressed by a finite utility function on the states of $Y$, i.e., $u_a: \mathcal{Y}_n \mapsto \mathbb{R}$.  Given this, $a$ myopically chooses the state of $P_{jkl}$ based on its immediate relational consequences.  Within such a setting, a frequently used behavioral choice model is the logistic choice model \citep{mcfadden:ch:1973}, so-called because of its functional form.  Specifically, the logistic choice model predicts that the probability of choosing a given option (versus a single alternative) is equal to the inverse logit of the difference in utilities between the alternatives.  When one option is decisively superior, the model converges towards the rational actor solution (strict utility maximization), alternately converging towards equal probability of selection at the indifference point; between these two extremes, the decision maker is predicted to favor the superior option, with the degree of favoring increasing with the difference in utilities.  Applying this framework to a prosphoric choice in the above case gives us:
\begin{gather}
\begin{split}
\Pr\left(P^{(i)}_{jkl}=\left(p^{(i-1)}\right)^+_{jkl}\left|\left(P^{(i-1)}\right)^c_{jkl}=\left(p^{(i-1)}\right)^c_{jkl},u_{a}\right.\right)\\ 
=\frac{\exp\left[u_{a}\left(r\left(\left(p^{(i-1)}\right)^+_{jkl}\right)\right)\right]}{\exp\left[u_{a}\left(r\left(\left(p^{(i-1)}\right)^+_{jkl}\right)\right)\right]+\exp\left[u_{a}\left(r\left(\left(p^{(i-1)}\right)^-_{jkl}\right)\right)\right]}
\end{split}\\
= \ilogit\left[u_{a}\left(r\left(\left(p^{(i-1)}\right)^+_{jkl}\right)\right)-u_{a}\left(r\left(\left(p^{(i-1)}\right)^-_{jkl}\right)\right)\right]. \label{e_choice}
\end{gather}
Thus, $a$'s choice is seen to be governed by the difference in his or her utilities for the states of $Y$ resulting from his or her decision, as desired.

One important special case of the above occurs when there exists some potential function, $\rho$, such that $\rho\left(y^+_{kl}\right) - \rho\left(y^-_{kl}\right) = u_{a}\left(y^+_{kl}\right)-u_{a}\left(y^-_{kl}\right)$ for all $a \in c_{kl}$ and all $\{k,l\}$ (or $(k,l)$ in the directed case).  In such a setting, Equation~\ref{e_choice} reduces to
\begin{equation}
\begin{split}
\Pr\left(P^{(i)}_{jkl}=\left(p^{(i-1)}\right)^+_{jkl}\left|\left(P^{(i-1)}\right)^c_{jkl}=\left(p^{(i-1)}\right)^c_{jkl},\rho\right.\right) =\\
\ilogit\left[\rho\left(r\left(\left(p^{(i-1)}\right)^+_{jkl}\right)\right)-\rho\left(r\left(\left(p^{(i-1)}\right)^-_{jkl}\right)\right)\right]
\end{split}, \label{e_choice2}
\end{equation}
and all decisions depend only on the difference in potentials (which is not agent-specific).  Considered in game theoretic terms, the existence of such a $\rho$ identifies the corresponding behavioral model as a \emph{potential game} \citep{monderer.shapley:geb:1996}, and we likewise say that $\rho$ is a potential for the tuple $(A,\ell,c,\mathcal{Y}_n)$.

\subsection{Equilibrium Distribution of Relations Under the Behavioral Model}

%Need to find cite for independence leading to cross-sectional result
%add \mathcal{P}_n

Given the above, we are now ready to state our primary result:

\begin{theorem} \label{t_eq}
Let $Y$ be the adjacency structure arising from the behavioral model specified by $(\mathcal{Y}_n,A,\ell,c,r,u)$ under edge updating process $X$, and let $Y^{[t]}$ be the state of $Y$ at time $t$.  If $\rho$ is a potential for $(A,\ell,c,\mathcal{Y}_n)$, and $X$ is such that
\begin{enumerate}
\item $X$ is independent of $P$; and
\item $\sum_{x: \tau(x)<t} I\left(a\left(x\right)=i,e_s(x)=j,e_r(x)=k\right) \to \infty$ as $t\to\infty$ a.s. for all $\{j,k\}$ (directed case $(j,k)$) in $E^*(\mathcal{Y}_n)$ and all $i \in c_{jk}$,
\end{enumerate}
then $Y^{[t]}$ converges in distribution to $\Pr\left(Y^{[t]}=y\right)=\left|\{p:r(p)=y\}\right| \frac{\exp\left[\rho\left(y\right)\right)}{\sum_{p' \in \mathcal{P}_n}\exp\left[\rho\left(r\left(p'\right)\right)\right]}$ on support $\mathcal{Y}_n$ as $t \to \infty$.
\end{theorem}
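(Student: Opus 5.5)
The plan is to lift the problem from the manifest graph $Y$ to the prosphoric array $P$, where the process is exactly a random-scan Gibbs sampler, and then push the resulting equilibrium forward through the resolution function $r$. Let $\mathcal{P}_n$ denote the set of prosphoric arrays compatible with $(A,\ell,c)$, and define the potential on $\mathcal{P}_n$ by $\tilde\rho(p)=\rho(r(p))$.

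\textbf{Step 1: rewrite the choice probabilities.} Equation~\ref{e_choice2} states that, when agent $a$ updates cell $P_{jkl}$, the probability of setting it to $1$ given all other cells is $\ilogit[\tilde\rho(p^+_{jkl})-\tilde\rho(p^-_{jkl})]$. By Equation~\ref{e_ergcond} applied to the array, this is exactly the full conditional of the ERG distribution on $\mathcal{P}_n$ with potential $\tilde\rho$.

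One subtlety needs checking here. The potential condition is stated for changes of a manifest edge, $y^+_{kl}$ versus $y^-_{kl}$, whereas an agent's decision changes $r(p)$ either in that single edge or not at all. In the latter case both utility and potential differences are zero, so the identity still holds. Because $r$ acts cellwise on $P_{\cdot kl}$, toggling $P_{jkl}$ affects at most $Y_{kl}$. Hence $u_a(r(p^+))-u_a(r(p^-))=\rho(r(p^+))-\rho(r(p^-))$ in every case, and no agent-specific terms remain.

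\textbf{Step 2: apply Proposition~\ref{prop_gibbs} to $P$.} Index the process by successive edge updating events, setting $P^{(i)}$ to the state after $X^{(i)}$; between events $P$ is constant. The variable-selection process is $S_i=(a(X^{(i)}),e_s(X^{(i)}),e_r(X^{(i)}))$, which picks a cell of $P$. Condition 1 of the theorem gives the independence of $S$ from the chain. Condition 2 gives that every cell is selected infinitely often almost surely. Since $\rho$ and hence $\tilde\rho$ is finite, Proposition~\ref{prop_gibbs} yields $P^{(i)}\to$ ERG$(\tilde\rho)$ on $\mathcal{P}_n$.

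\textbf{Step 3: convert to continuous time.} We need $P^{[t]}$, the state at time $t$, to converge to the same limit. Condition 2 forces the number of events before $t$ to diverge almost surely. The event count $N(t)$ is a function of $X$ alone, and $X$ is independent of $P$'s update randomness. So we condition on the whole realization of $X$, under which the chain is a Gibbs sampler with a deterministic scan that visits every cell infinitely often. For a deterministic scan of this kind, convergence follows from the cell-wise update probabilities being bounded away from $0$ and $1$; the uniform lower bound on the positive-probability transitions holds because $\mathcal{P}_n$ is finite and $\tilde\rho$ is finite. We then average over $X$ using bounded convergence.

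This step is the main obstacle. Proposition~\ref{prop_gibbs} is stated for an index sequence, and the scan order may be irregular, so I would spell out the conditioning argument explicitly. Concretely, the total-variation distance to stationarity contracts by a fixed factor $<1$ every time all cells have been visited, and the number of completed full sweeps tends to infinity almost surely.

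\textbf{Step 4: push forward through $r$.} Since $Y^{[t]}=r(P^{[t]})$ and $r$ is a fixed map, convergence in distribution is preserved. The limit is
\[
\Pr(Y=y)=\sum_{p:r(p)=y}\frac{\exp[\rho(r(p))]}{\sum_{p'\in\mathcal{P}_n}\exp[\rho(r(p'))]}=\left|\{p:r(p)=y\}\right|\frac{\exp[\rho(y)]}{\sum_{p'\in\mathcal{P}_n}\exp[\rho(r(p'))]},
\]
using that $\tilde\rho$ is constant on each fiber $r^{-1}(y)$. The support is $r(\mathcal{P}_n)=\mathcal{Y}_n$, since every manifest graph is attainable under the resolution rules, which completes the argument.
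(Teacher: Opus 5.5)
Your proposal is correct and follows essentially the same route as the paper's proof: you identify each prosphoric update with a Gibbs full conditional of the ERG with potential $\rho(r(\cdot))$ on $\mathcal{P}_n$, invoke Proposition~\ref{prop_gibbs}, pass from event index to time $t$, and sum over the fibers $r^{-1}(y)$. Your Step~3 is a more careful version of the paper's one-line appeal to the event count diverging, and the per-sweep contraction you assert is indeed uniform in sweep length: requiring only the last visit to each of the $N$ cells in a sweep to hit its target gives a minorization of at least $\delta^{N}$ for every pair of states, where $\delta>0$ is the lower bound on single-cell conditional probabilities.
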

\begin{proof}
We begin by noting that the independence of $X$ and $P$ implies that both $P$ and $Y$ are embedded Markov chains; without loss of generality, then, we may restrict our attention to the sequences $Y^{(1)},\ldots,Y^{(i)},\ldots$ and $P^{(1)},\ldots,P^{(i)},\ldots$ arising from the set of realized edge updating events $x^{(1)},\ldots,x^{(i)},\ldots$.  Since $\rho$ is a potential for $(A,\ell,c,\mathcal{Y}_n)$, we have from Equation~\ref{e_choice2} that  $\Pr\left(P^{(i)}_{jkl}=\left(p^{(i-1)}\right)^+_{jkl}\left|\left(P^{(i-1)}\right)^c_{jkl}=\left(p^{(i-1)}\right)^c_{jkl},\rho\right.\right)$ $= \ilogit\left[\rho\left(r\left(\left(p^{(i-1)}\right)^+_{jkl}\right)\right)-\rho\left(r\left(\left(p^{(i-1)}\right)^-_{jkl}\right)\right)\right]$ (where $j=a(x_i)$, $k=e_s(x_i)$, and $l=e_r(x_i)$).  From Equation~\ref{e_ergcond}, however, this is exactly equal to the conditional probability that $P_{jkl}=p^+_{jkl}$ for a multivariate ERGM process on $P$ with potential $\rho(r(P))$.  Now, consider the sequential edge updating process formed by the realized updating events.  By assumption, $x^{(i)}$ is independent of $P$.  Further, \break $\sum_{x: \tau(x)<t} I\left(a\left(x\right)=i,e_s(x)=j,e_r(x)=k\right) \to \infty$ as $t\to\infty$ a.s. clearly implies that \break $\sum_{i=1}^\infty I\left(a\left(x\right)=i,e_s(x)=j,e_r(x)=k\right) \to \infty$ a.s. as well.  This implies that the sequence $P^{(0)},P^{(1)},\ldots$ satisfies the conditions of Proposition~\ref{prop_gibbs}, and hence $P^{(0)},P^{(1)},\ldots,P^{(i)}$ approaches $\Pr\left(P^{(i)}=p\left|r,\rho\right.\right)=\exp\left[\rho\left(r\left(p\right)\right)\right]/\left[\sum_{p' \in \mathcal{P}_n}\exp\left[\rho\left(r\left(p'\right)\right)\right]\right]$ in distribution as $i \to \infty$.  Since the summation condition on $x^{(i)}$ also implies that the total number of updating events must approach infinity almost surely as $t \to \infty$, this limit must hold in $t$ as well.

It now remains to find the limiting distribution of $Y^{(i)}$.  We proceed by first integrating over $p$, and then substituting from our previous results: 
\begin{align}
\Pr\left(Y^{(i)}=y\left|\rho\right.\right) &= \sum_{p \in \mathcal{P}_n} \Pr\left(Y^{(i)}=y\left|P^{(i)}=p,\rho\right.\right) \Pr\left(P^{(i)}=p\left|r,\rho\right.\right)\\
&= \sum_{p: r(p)=y} \Pr\left(P^{(i)}=p\left|r,\rho\right.\right)\\
&= \sum_{p: r(p)=y} \frac{\exp\left[\rho\left(r\left(p\right)\right)\right]}{\sum_{p' \in \mathcal{P}_n}\exp\left[\rho\left(r\left(p'\right)\right)\right]}\\
&= |\{p: r(p)=y\}| \frac{\exp\left[\rho\left(y\right)\right]}{\sum_{p' \in \mathcal{P}_n}\exp\left[\rho\left(r\left(p'\right)\right)\right]}.
\end{align}
Since this limit also holds in $t$, the limiting distribution of $Y^{[t]}$ must be as derived.
\end{proof}

Theorem~\ref{t_eq} provides us with an expression for the long-run cross-sectional behavior of $Y$, in the sense that a cross-sectional ``snapshot'' of $Y$ taken at a random time will (in the large-$t$ limit) have the specified distribution.  This suggests, in practice, that models for cross-sectional data drawn from sources for which the above assumptions are reasonable may be (in some cases, at least) given a behavioral interpretation.  On the other hand, the result also contains some cautionary elements.  For instance, we note that the limiting distribution of $Y^{[t]}$ is trivially equivalent to a ``typical'' ERGM on $\mathcal{Y}_n$ only in the simple case of unilateral edge control.  In the more general case, $Y^{[t]}$ results from an ERGM mixture over the states of $P$, with properties that depend upon the edge control structure (via $r$).  Thus, distributions arising from relations with multilateral edge control may look very different from those arising from relations with unilateral edge control, \emph{even when the agents' utilities in the two cases are identical.}  One cannot simply pose an arbitrary ERGM and presume that it is a credible equilibrium distribution for a given social process; rather, one must first show that this choice is consistent with the assumed edge control mechanism.

To this latter end, it may be useful to note that the limiting distribution of $Y^{[t]}$ can be rewritten as
\begin{equation}
\Pr\left(Y^{[t]}=y\right)= \frac{ \left|\{p:r(p)=y\}\right| \exp\left[\rho\left(y\right)\right]}{\sum_{y' \in \mathcal{Y}_n} \left|\{p:r(p)=y'\}\right| \exp\left[\rho\left(y'\right)\right]} \label{e_limtilt}
\end{equation}
by grouping like terms within the normalizing factor.  This is immediately recognizable as a standard ERGM on $\mathcal{Y}_n$ with reference measure $h(y) =  \left|\{p:r(p)=y\}\right|$.  In the special case for which $\left|\{p:r(p)=y\}\right|=\left|\{p:r(p)=y'\}\right|$ for all $y,y' \in \mathcal{Y}_n$, the respective weighting factors cancel out, leaving us with a counting measure ERGM($\rho$) on support $\mathcal{Y}_n$.  As noted above, this is trivially true in the unilateral case, for which $|\{p:r(p)=y\}|=1$ for all $y$.  On the other hand, this condition is clearly \emph{not} satisfied for bilateral relations in either the symphonic or epibolic case, as can be appreciated by comparing the set sizes for the undirected graphs $K_n$ (1 and $3^{\tbinom{n}{2}}$, respectively) and $N_n$ ($3^{\tbinom{n}{2}}$ and 1).  Intuitively, symphonic relations allow more opportunities for manifest edges to be absent, while epibolic relations allow more opportunities for manifest edges to be present; this underlying opportunity structure interacts with agent preferences to produce the manifest graph distribution, and must be taken into account.

To get further insight into the limiting distribution of $Y^[t]$, we may further re-write equation~\ref{e_limtilt} by bringing the reference measure within the exponentiated potential, i.e.
\begin{align}
\Pr\left(Y^{[t]}=y\right) &= \frac{ \exp\left[\ln \left|\{p:r(p)=y\}\right| + \rho\left(y\right)\right]}{\sum_{y' \in \mathcal{Y}_n}  \exp\left[\ln \left|\{p:r(p)=y'\}\right| + \rho\left(y'\right)\right]}\\
&= \frac{ \exp\left[\ln h(y) + \rho\left(y\right)\right]}{\sum_{y' \in \mathcal{Y}_n}  \exp\left[\ln h(y') + \rho\left(y'\right)\right]}, \label{e_limfin}
\end{align}
with reference measure $h(y)=\left|\{p:r(p)=y\}\right|$ being the number of ways in which $y$ can arise under $r$, a quantity corresponding to the \emph{multiplicity} of $y$.  Written in this form it is apparent that the distribution of $Y[t]$ can be viewed as a standard ERGM whose graph potential has two additive components: the behavioral potential, $\rho$, based on agents' utilities, and a deterministic offset, $\ln h$, that depends only on the way in which agents' choices are resolved into edges. This separation of effects greatly facilitates prediction of the impact of $r$ on network structure, since $h$ typically has a very simple form.  For instance, consider the family of cases in which $r$ is edgewise decomposible, in the sense that the state of the $(i,j)$ edge (i.e., $r(p)_{ij}$)  depends only on $p_{\cdot jk}$.  This family includes all scenarios in which decisions are not inherently ``bundled'' (i.e., one can attempt to create or sever one edge without creating or severing another), as is clearly the case for most social science applications.  For this family, we can further decompose $s$ in terms of its edgewise components, i.e. $h(y)=\prod_{ij}h'_{ij}(y_{ij})$; where the same rules apply for every edge (i.e., where $r$ is both edgewise decomposible and homogeneous), this further simplifies to $h(y)=\prod_{ij}h'(y_{ij})$, which (gathering terms) can be reduced simply to $h'(1)^{\sum_{ij}y_{ij}} h'(0)^{\sum_{ij}(1-y_{ij})}$.  Working with this last expression, it is helpful to observe that the deterministic relationship between edges and non-edges means that both counts need not be maintained:
\begin{align*}
h(y)&=h'(1)^{\sum_{ij}y_{ij}} h'(0)^{\sum_{ij}(1-y_{ij})}\\
&= \left(\frac{h'(1)}{h'(0)}\right)^{\sum_{ij}y_{ij} h'(0)^{\sum_{ij}1}}\\
&\propto \left(\frac{h'(1)}{h'(0)}\right)^{\sum_{ij}y_{ij}}.
\end{align*}
Since the constant of proportionality here is invariant to $y$, it divides out of equation~\ref{e_limfin} and can be ignored.  We are left with the graph potential element $\ln h(y) = \left(\sum_{ij}y_{ij}\right) \ln \left(h'(1)/h'(0)\right)$ which is immediately recognizable as an offset to the edge parameter of the combined ERGM.  This quantity is easily calculated: for instance, in the bilateral symphonic case $h'(1)=1$ and $h'(0)=3$ (epibolic case, $h'(1)=3, h'(0)=1$) and hence the effect on the combined model is to shift the effective edge parameter by $\ln (1/3) \approx -1.1$ (epibolic case, $\ln 3 \approx 1.1$).  That something as fundamental as changing the norms by which an interaction process operates can be expressed in a deterministic offset is both gratifying and useful, in that it allows us to immediately model the impact of a hypothetical shift in $r$.  It also provides an effective foundation for inference regarding $\rho$: given an observation $y$ from the equilibrium distribution of $Y[t]$, we need only estimate the combined graph potential using standard ERGM inference methods \citep[e.g.,][]{schweinberger.et.al:ss:2020} and then adjust the resulting parameter estimates to account for the effect of $r$.  While inhomogeneous and/or non-decomposible resolution functions may result in a more complex adjustment, this basic approach can be employed with any choice of $r$.

In addition to adjustment for tie formation norms, Theorem~\ref{t_eq} incorporates the assumption that  equilibrium potential $\rho$ is also a potential for the utilities of the agents in $A$ (i.e., that the associated choice process is a potential game).  For this condition to be satisfied, it is clearly the case that $\rho$ must arise from a behaviorally credible choice of $u$.  In the Section~\ref{sec_application}, we examine this issue more closely, focusing on strategies for constructing plausible potentials and on their subsequent use in inference.

\subsubsection{Opportunity, Entropy, and Relational Norms}

The fact that relational norms (via $r$) enter into our equilibrium ERGM via the log multiplicity immediately suggests an entropic interpretation of the associated term: the hidden dynamics of the choice process create more \emph{opportunities} for some patterns of relations to occur, which manifests as an entropic force biasing network structure.  As noted e.g. by \citet{butts:jms:2019}, the log of the reference measure, $h$ can in general be interpreted as the entropy function for the graph microstate associate with hidden degrees of freedom (i.e., unobserved generative processes).  Under the counting measure ($h(y)\propto 1$), every distinct network can arise in one way, and the only factors influencing network structure are those arising from the social forces expressed by $\theta^T t(y)$.  As we have seen, this is compatible with a network arising from a decision process with unilateral edge control.  Other generative processes, however, lead to different reference measures: under the contact formation process (CFP) with uniform local population density, for instance, one obtains the Krivitsky reference \citep{krivitsky.et.al:statm:2011} $h(y)=n^{-t_e(y)}$ (where $t_e$ is the edge count), which favors sparse graphs; this reflects the act that dense graphs under the CFP require many local encounters among individuals, and there is a lower density of trajectories with many such encounters than with few.  Here, we can directly observe how latent degrees of freedom emerge from multilateral edge control, and how the translation of decision patterns into realized edge states via relational norms leads to differences in entropy by graph state.  This in turn allows us to, given a fitted or posited model, make arguments about how the resulting networks would be expected to change under different relational norms (holding preferences constant).  We illustrate the use of this idea below.

\section{Application to Behavioral Inference} \label{sec_application}

%basic idea: posit utilities, derive potentials, fit models, compare models

Theorem~\ref{t_eq} provides a formal basis for moving from a choice process to a cross-sectional model in ERGM form.  In practice, however, our interest is more often in the inverse problem: given cross-sectional network data, we would like to infer something about the behavioral process which generated it.  Where the background assumptions of the previous section are met, this type of behavioral inference is often possible.  For instance, different agent utility functions typically lead to distinct equilibrium distributions -- since the equilibrium distribution serves as a likelihood for $Y^{[t]}$, comparison of competing utility functions vis a vis some observed cross-section $y$ may be reduced to a standard problem of likelihood-based model selection.  In other cases, agent utilities may be posited in terms of an unknown parameter vector (e.g., expressing the relative valuation of different positional properties); here, standard parameter estimation methods may be used to determine (with some degree of of uncertainty) these unknown quantities.

In this section, we discuss the construction of candidate utility functions for the purpose of behavioral inference.  Following this, we illustrate the use of these principles with a simple example involving friendship in a professional organization, showing how we can use the behavioral interpretation to gain additional insights from an empirically calibrated model.

\subsection{Constructing Behaviorally Interpretable Potentials}

The basic procedure for constructing behaviorally interpretable potentials is in principle straightforward.  First, one identifies a hypothetical utility function (or linearly separable element thereof) for some agent $a_i$, $u_i$.  Next, one identifies the difference in $u_i$ resulting from a single edge change for which $i$ is a controller (i.e., $u_i\left(y^+_{jk}\right)-u_i\left(y^-_{jk}\right)$.  Finally, one seeks some function $\rho$ such that $\rho\left(y^+_{jk}\right)-\rho\left(y^-_{jk}\right)=u_i\left(y^+_{jk}\right)-u_i\left(y^-_{jk}\right)$ for all $y,j,k$, and for all $i \in c_{jk}$.   For complex utility functions, it is useful to observe that if $u_i\left(y\right)=\sum_l u_{il}\left(y\right)$ (i.e., if $i$'s utility is linearly decomposible) and if $\rho_l\left(y^+_{jk}\right)-\rho_l\left(y^-_{jk}\right)=u_{il}\left(y^+_{jk}\right)-u_{il}\left(y^-_{jk}\right)$ under the above conditions, then $\rho\left(y\right)=\sum_j \rho^{(j)}_i\left(y\right)$ is a potential for $u_i$ (and for the entire system, if all conditions are met).  Thus, where partial utilities are linearly separable, one can construct a global potential by deriving and aggregating partial potentials.  

It is noteworthy that many of the statistics used in ERGM parameterization (and notably those arising from Hammersley-Clifford \citep{pattison.robins:sm:2002}) are aggregates of indicators for the presence or absence of local subgraphs or induced subgraphs (graphlets); these lend themselves to the above treatment, provided that we can approximate the partial utility associated with the presence or absence of the subgraph (e.g., a specific, labeled two-star) to be equal for all controllers of the edges in the subgraph.  (Note that total utility functions need not be the same across agents, nor across different labeled subgraphs.)  When controllers' partial utilities are not equal, the equilibrium (and thus coefficients estimated by fitting models to data) will reflect a weighted average of the controllers' utilities.

\subsection{Example: Friendship in a Law Firm}

Here, we demonstrate behavioral interpretation of ERGMs by application to a well-known empirical study of friendship in a law firm \citep{lazega1992analyse}.  We begin by fitting a cross-sectional model to the network, with terms motivated by basic social mechanisms.  We then provide a behavioral interpretation of the resulting parameter estimates, and leverage the generative nature of the ERGM process to perform counterfactual analyses of the inferred social system.  Counterfactual analysis enables us to explore the implications of our model beyond the insights offered by conventional null hypothesis tests, most particularly the potential impact of a hypothetical change in social norms regarding the way that friendship is defined.  

\subsubsection{Lazega Friendship Network}

Members of groups and organizations have preferences that influence their behavior, impacting their decisions on what they want to do and who they want to do it with. We consider an instance of such a group, namely a friendship network within a professional setting collected by \citet{lazega1992analyse}.  The network was collected within a law firm between 1988-1991, with the firm having offices in three locations in the northeastern US (Boston, Hartford, and Providence).  The nodes in the network (also here taken to be the agents) are the 71 firm members, for which we have information on age, gender (male or female), the law school a particular member attended, their office, which field of law they work in (litigation or corporate practice), their seniority (number of years at the firm), and their status (partner or associate). There is also dyadic covariate information on whether or not each pair of individuals are coworkers.

%It consists of 854
%directed edges and 71 nodes. 

Relational data was collected by self-reports of perceived friendship; as the specific question used implies interactions that are mutual by definition,\footnote{As reported by Lazega, ``Quels sont parmi vos coll\`{e}gues ceux avec lesquels vous avez des activit\'{e}s sociales dehors du travail, des activit\'{e}s qui n'ont rien \`{a} voir avec le travail? Par exemple des coll\`{e}gues dont vous connaissez bien la famille et qui connaissent la v\^{o}tre'' \citep[p583]{lazega1992analyse}; roughly, ``Which are the co-workers with whom you have social activities outside of work, activities that have nothing to do with work?  For example, co-workers whose families you know well and who know yours?'' Though reported in French, the original instrument was presumably in English.} we interpret the data to consider of multiply reported undirected ties (with two reports per edge).  Following the error analysis study of \citet{lee.butts:sn:2018}, we employ the Intersection LAS \citep{krackhardt:sn:1988} as our estimate of the underlying network; that is, we take a relationship to be present between two individuals if and only if both report that the relationship is present.  After integrating informant reports, we observe a relatively sparse graph with a mean degree of 3.63 (density 0.0519) and a moderate level of degree skewness (degree centralization 0.14).  The network is visualized in Figure~\ref{fig:lazega_orig}.  As can be seen, the network is comprised of two large and somewhat cohesive subgroups, each largely organized around a particular office.  In addition to a number of individuals who are pendant to each group, we observe a relatively large number of isolates (14 of 71), who appear not to have personal friendships within the organization.

%% Much of this section hinges on what we select as the final model,
%% so it's more skeleton-like right now.

\begin{figure}
  \begin{center}
    \includegraphics[width=0.5\textwidth]{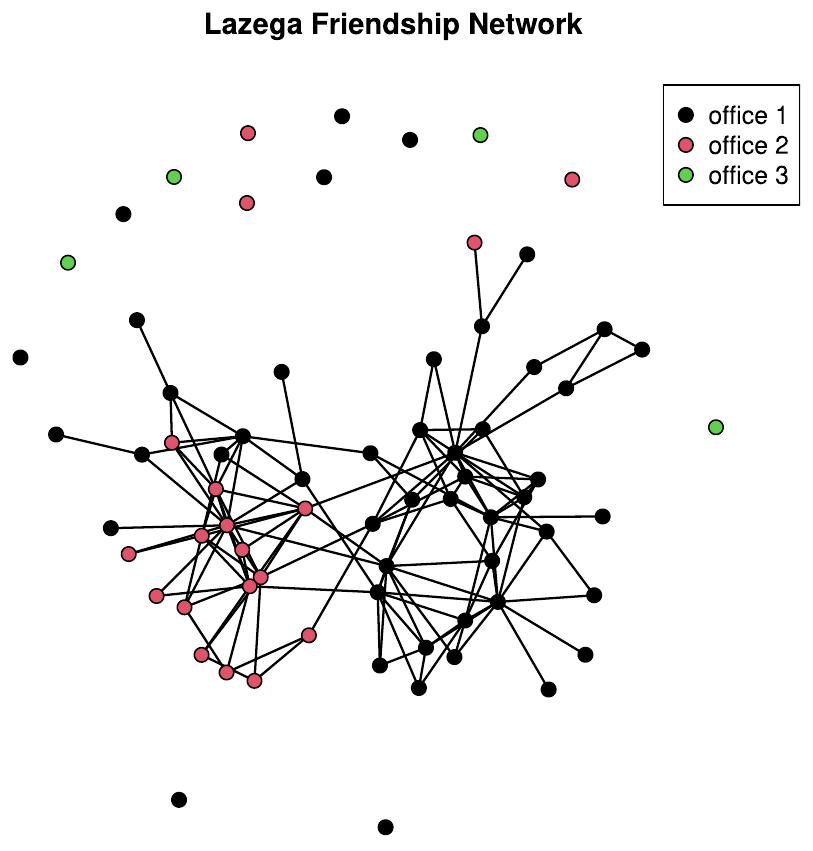}
    \caption{The Lazega friendship network; vertex colors indicate office.   \label{fig:lazega_orig}}
  \end{center}
\end{figure}

\subsubsection{Model Specification and Fitting}

Given that our relation of interest clearly involves a bilateral investment of time and interest, we interpret this network as a symphonic relation under bilateral control, with the controllers of each edge being the agents associated with the two endpoints.  We here propose a number of mechanisms that could potentially affect agents' decisions regarding the friendship network; each is here interpreted in terms of its hypothetical contribution to the partial utilities of its edge controllers.

\begin{list}{}{}
\item \emph{Baseline partial utilities:} Individuals may clearly vary in their baseline preferences for within-firm friendship.  An overall baseline partial utility can be obtained via a standard edgecount term (which acts as an intercept within an ERGM), while per-agent differences can be inferred by adding sender/receiver fixed effects for each agent (\emph{sociality terms}).  To identify the model, we take the partial friendship utility of the first agent as a reference.  The resulting model parameters indicate the baseline utility of each agent to add or maintain a friendship, prior to consideration of other factors.
\item \emph{Mixing preferences:} Homophily preferences are well-known in friendship formation \citep{mcpherson.et.al:ars:2001}, whether driven by the relative ease of interacting with others like oneself \citep{mayhew.et.al:sf:1995} or by other considerations; in mileu in which personal relationships are pursued for strategic ends, by contrast, ties to those differently situated from oneself may provide greater value \citep{granovetter:ajs:1973}.  In both cases, however, agents can be expected to care about the relationships between their own attributes or subgroup memberships and those of potential alters.  Here, the partial utilities for such effects (i.e., the difference in utility for forming a relationship due to differential attributes, net of other factors) can be accomplished by covariate effects for matching on key variables (nodematch statistics), and absolute differences along other dimensions (absdiff statistics).  Here, we examine matching for gender, status in the firm, law school attended, office, and type of law practiced.  Absolute differences are employed for age and seniority within the firm.  The resulting model parameters indicate the utility difference associated with adding a tie to an alter matching on the indicated characteristic (respectively, differing on the characteristic by one unit), net of other factors.
\item \emph{Engagement preferences:} In addition to an overall utility for forming ties, individuals may have other preferences regarding their overall level of engagement with the network.  These can include preferences for or against isolation, or concurrency (maintaining at least two ties, e.g. so as to avoid the liabilities of pendancy \citep{yamagishi.cook:spq:1993}).  We capture these respective effects by terms for counts of isolates and concurrent vertices.  For the isolate statistic, the associated parameter can be viewed as the utility \emph{penalty} for adding an edge to an alter for an ego who is currently isolated; a negative effect hence suggests that being an isolate is aversive.  For the concurrent vertices statistic, the associated parameter indicates the excess utility obtained by forming an additional edge, when one is currently a pendant (i.e., becoming a concurrent vertex).  Obviously, these same effects also apply (\emph{mutatis mutandis}) for decisions to remove an edge.
\item \emph{Embeddedness preferences:} Finally, we observe that individuals have long been observed to be concerned with the embeddedness of their relations within broader substructures that may either facilitate or complicate individual relationships.  In the context of the law firm, one such context is co-work: direct co-workers may be convenient alters with whom to form friendships, our could alternately be seen as strategically risky given professional obligations.  We can probe this by using the network of co-work ties as an edgewise covariate predicting friendship.  Co-work also serves as a site for observation of behavior, and we might hypothesize that when two co-workers have common friends who are also co-workers with both of them, that this encourages formation of a friendship tie within the focal dyad.  We capture this effect by a local triangle term (in the sense of \citet{strauss.ikeda:jasa:1990}), where closure effects are limited to co-work cliques.  As type of legal practice also creates a potential basis for common ground, we further consider cyclic ties (i.e., relationships embedded within cliques) among those with common practice as an additional effect.  Finally, we consider a partial utility effect for positive-tie balance, representing a general preference to both close and to avoid creating open triangles.  The parameters in these cases all indicate the utility contribution to ego for ties that maintain or increase the associated type of embeddedness, net of other factors.
\end{list}

Model estimation for ERGMs with per-node sociality parameters is difficult, requiring regularization for reasonable performance; likewise, we have a large number of hypothesized terms, not all of which may in fact shape preferences.  To both stabilize the sociality terms and shrink non-predictive effects towards zero, we employ L2 regularized estimation (analogous to ridge regression).  We perform hyperparameter tuning using an approximate inference strategy based on maximum pseudo-likelihood inference, with 10-fold cross-validation on the set of edge variables employed to tune the regularization parameter (with a pseudo-likelihood objective; optimization by golden section search).  Regularization has been shown to improve MPLE performance \citep{vanduijn.et.al:sn:2009}, and the efficiency of the procedure allows us to employ it where computational costs would otherwise be prohibitive.  Final estimates were then obtained using L2-regularized stochastic approximation, employing a regularized variant of the algorithm of \citet{snijders:joss:2002}; $5 \times 10^4$ burn-in iterations with a thinning interval of $5 \times 10^5$ were employed for all draws.  Visualization and analysis was performed in the \texttt{R} statistical computing system \citep{rteam:sw:2026} using custom scripts employing functions from the \texttt{sna} \citep{butts:jss:2008b}, \texttt{network} \citep{butts:jss:2008a}, and \texttt{ergm} \citep{hunter.et.al:jss:2008,krivitsky.et.al:jss:2023} packages of the \texttt{statnet} \citep{handcock.et.al:jss:2008} library.

\begin{figure}
  \begin{center}
    \includegraphics[width=.75\textwidth]{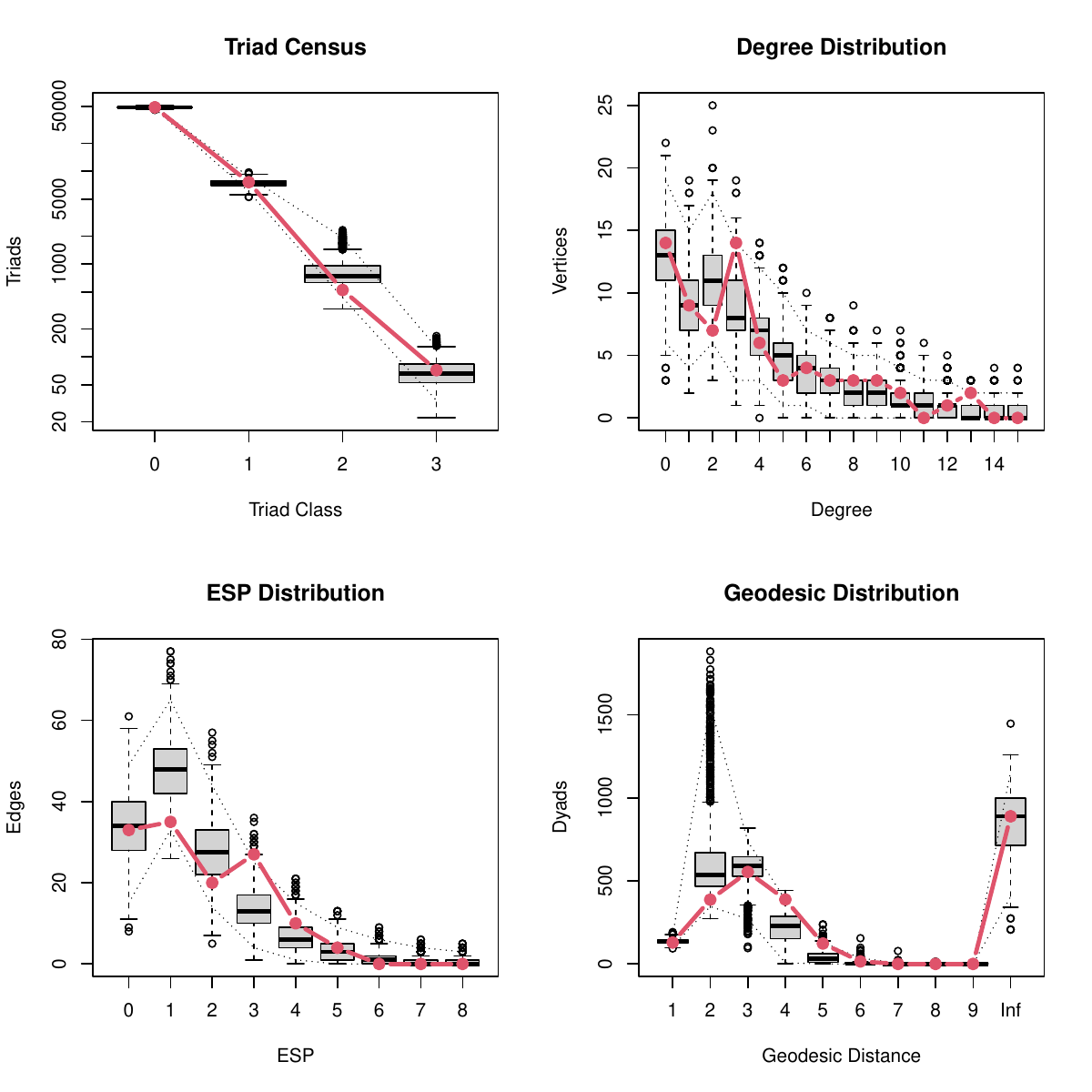}
    \caption{Model adequacy checks for the Lazega model; dotted lines indicate 95\% simulation intervals, red points indicate observed data.  \label{fig:adequacy}}
  \end{center}
\end{figure}

We assess model adequacy using the standard approach \citep{goodreau.et.al:jss:2008} of simulating draws from the fitted model and verifying coverage of observed network properties by the resulting simulation intervals (1000 draws, 16,384 burn-in iterations, thinning interval 1,024).   The results are shown in Figure~\ref{fig:adequacy}.  The fitted model successfully recapitulates the overall network structure, and we thus regard it as adequate for our subsequent analyses.

\subsubsection{Parameters and Interpretation}

Parameter estimates for the friendship network model are shown in Tables~\ref{tab:reg_eff} and \ref{tab:soc_eff}; due to the large number of sociality effects, these are shown distributionally in Figure~\ref{fig:soc_eff}.  Note that, due to L2 regularization, standard error and $p$-value estimates are heuristic. The strongest systematic effects are associated with embeddedness, with friendship ties to co-workers having higher payoff, ties to co-workers with shared co-working friends being highly favorable, and with an additional tendency to form clique-embedded ties with alters who share common law practice.  There is also a tendency to form ties with members of the same office or status. Net of these strong effects, we do not see reliable effects of other attributes (though the signs of almost all estimates are consistent with homophily preferences).

\begin{table}[ht]
\centering
\begin{tabular}{lrrrl}
  \hline\hline
  Term & $\hat{\theta}$ & se & $p$-value &  \\ 
  \hline
  edges & -2.34 & 1.21 & 0.05 & *\\ 
  isolates & -0.58 & 0.80 & 0.47 &\\ 
  concurrent vertices & 0.59 & 0.64 & 0.36 &\\ 
  local co-work triangles & 0.17 & 0.34 & 0.62 &\\ 
  co-practice cyclic ties & 0.56 & 0.24 & 0.02 &*\\ 
  co-work ties & 1.57 & 0.36 & 0.00 & ***\\ 
  balance & -0.05 & 0.01 & 0.00 & ***\\ 
  nodematch (gender) & -0.09 & 0.32 & 0.79 &\\ 
  nodematch (status) & 0.71 & 0.33 & 0.03 &*\\ 
  nodematch (law school) & -0.04 & 0.25 & 0.89 &\\ 
  nodematch (office 1) & 0.82 & 0.99 & 0.41 &\\ 
  nodematch (office 2) & 2.51 & 1.14 & 0.03 &*\\ 
  nodematch (practice) & 0.69 & 0.39 & 0.08 &\\ 
  absdiff (age) & -0.01 & 0.03 & 0.79 &\\ 
  absdiff (seniority) & -0.01 & 0.04 & 0.75 &\\ 

  \hline\hline
\end{tabular}
\caption{Fitted model coefficients for all non-sociality effects.%; zero coefficients indicate parameters that have been regularized to below 0.005 in magnitude. 
 L2 regularization parameter $\lambda=0.03998$.} 
\label{tab:reg_eff}
\end{table}

\begin{figure}
  \begin{center}
    \includegraphics[width=.5\textwidth]{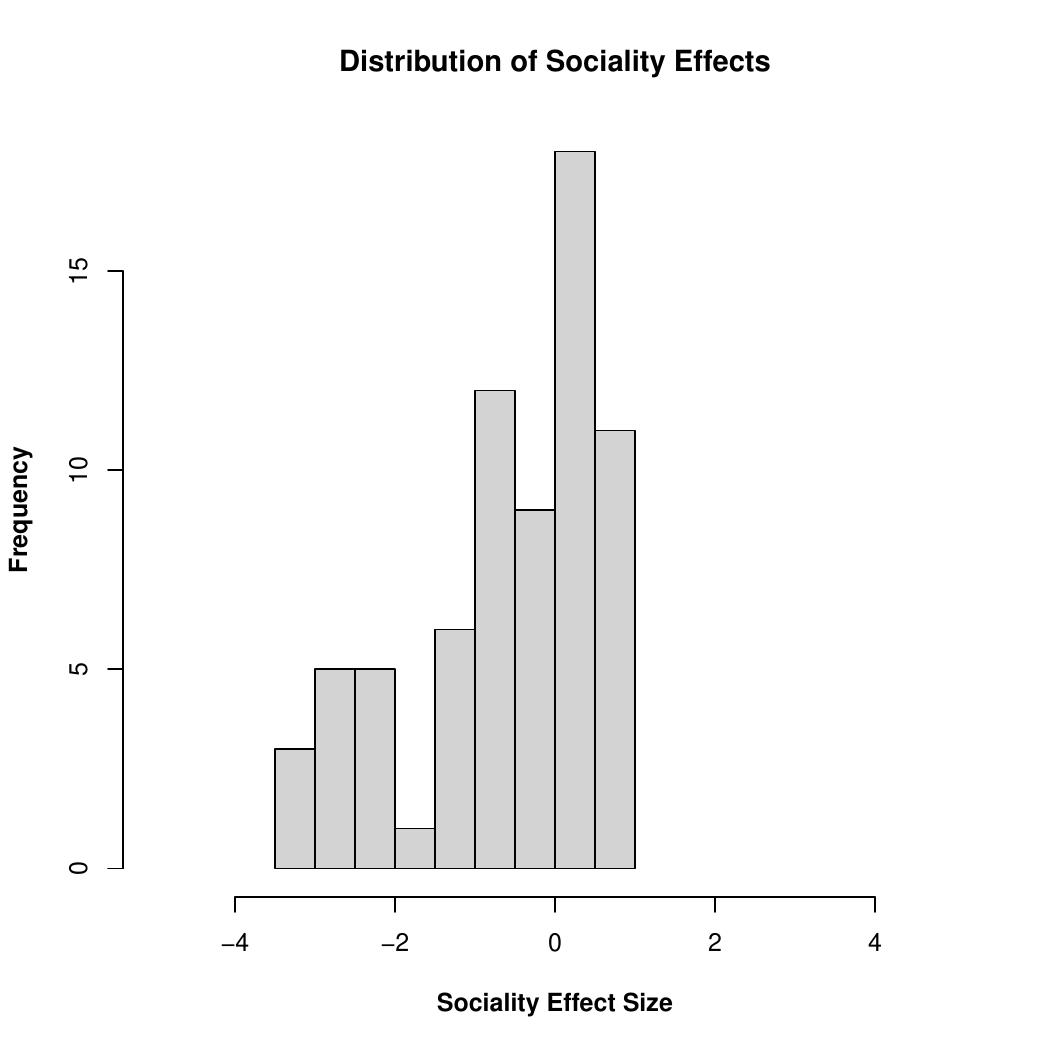}
    \caption{The distribution of sociality effects in the fitted model.  \label{fig:soc_eff}}
  \end{center}
\end{figure}

Although the individual coefficients are not precisely estimated, the overall distribution of sociality parameters in Figure~\ref{fig:soc_eff} provides some insight into preference heterogeneity within the network.  Broadly, the vast majority of agents have fairly similar baseline utilities for within-firm friendship (as evidenced by the strong peak at zero), but a fair number do appear to find friendships more or respectively less attractive than average.  This suggests the possibility of differing preferences for social interaction amongst the agents within the network.

\begin{table}[ht]
\centering
\begin{tabular}{rrrrrrrrrrrr}
  \hline\hline
  Term	& $\hat{\theta}$	& se	& $p$-value&	Term	& $\hat{\theta}$	& se	& $p$-value	&Term	& $\hat{\theta}$	& se	& $p$-value \\
  \hline
  sociality2	&	-2.58	&	2.75	&	0.35	&	sociality27	&	-0.31	&	1.2	&	0.79	&	sociality52	&	0.23	&	1.08	&	0.83	\\
  sociality3	&	-3.13	&	2.09	&	0.14	&	sociality28	&	0.73	&	0.5	&	0.14	&	sociality53	&	-0.77	&	1.35	&	0.57	\\
  sociality4	&	-1.23	&	1.45	&	0.4	&	sociality29	&	0.32	&	1.07	&	0.76	&	sociality54	&	0.53	&	0.99	&	0.6	\\
  sociality5	&	0.39	&	0.62	&	0.53	&	sociality30	&	0.55	&	0.5	&	0.28	&	sociality55	&	0.07	&	0.99	&	0.94	\\
  sociality6	&	-0.93	&	0.94	&	0.32	&	sociality31	&	-0.04	&	0.55	&	0.94	&	sociality56	&	0.68	&	0.98	&	0.49	\\
  sociality7	&	-1.16	&	1.13	&	0.31	&	sociality32	&	-0.1	&	0.54	&	0.85	&	sociality57	&	0.43	&	1.05	&	0.68	\\
  sociality8	&	-2.68	&	3.04	&	0.38	&	sociality33	&	-1.52	&	0.99	&	0.12	&	sociality58	&	-0.97	&	0.78	&	0.21	\\
  sociality9	&	-2.61	&	2.58	&	0.31	&	sociality34	&	-1.1	&	1.55	&	0.48	&	sociality59	&	-0.79	&	0.86	&	0.36	\\
  sociality10	&	-2.24	&	2.23	&	0.32	&	sociality35	&	0.62	&	0.57	&	0.27	&	sociality60	&	0.67	&	1.02	&	0.51	\\
  sociality11	&	0.4	&	0.99	&	0.68	&	sociality36	&	-2.32	&	1.94	&	0.23	&	sociality61	&	-0.72	&	1.59	&	0.65	\\
  sociality12	&	0.44	&	0.99	&	0.66	&	sociality37	&	-2.35	&	3.64	&	0.52	&	sociality62	&	0.35	&	1.12	&	0.75	\\
  sociality13	&	0.81	&	0.92	&	0.38	&	sociality38	&	-0.46	&	1.16	&	0.69	&	sociality63	&	-1.11	&	0.96	&	0.25	\\
  sociality14	&	-3.47	&	1.94	&	0.07	&	sociality39	&	0.65	&	0.98	&	0.51	&	sociality64	&	0.49	&	1.1	&	0.65	\\
  sociality15	&	-2.62	&	2.36	&	0.27	&	sociality40	&	0.43	&	0.98	&	0.66	&	sociality65	&	0.26	&	1.03	&	0.8	\\
  sociality16	&	-0.22	&	1.13	&	0.85	&	sociality41	&	0.45	&	0.98	&	0.65	&	sociality66	&	-0.86	&	1.21	&	0.48	\\
  sociality17	&	-1.04	&	1.34	&	0.44	&	sociality42	&	0.2	&	1.11	&	0.85	&	sociality67	&	-0.12	&	1.14	&	0.92	\\
  sociality18	&	-3.21	&	1.47	&	0.03	&	sociality43	&	0.12	&	1.04	&	0.91	&	sociality68	&	-0.54	&	1.22	&	0.66	\\
  sociality19	&	0.62	&	0.97	&	0.52	&	sociality44	&	-2.32	&	3.15	&	0.46	&	sociality69	&	-0.05	&	1.09	&	0.96	\\
  sociality20	&	0.14	&	1.11	&	0.9	&	sociality45	&	0.21	&	1.11	&	0.85	&	sociality70	&	-0.88	&	1.45	&	0.55	\\
  sociality21	&	0.63	&	0.98	&	0.52	&	sociality46	&	-0.03	&	0.82	&	0.97	&	sociality71	&	-0.37	&	1.28	&	0.77	\\
  sociality22	&	-0.54	&	1.15	&	0.64	&	sociality47	&	-2.61	&	3.46	&	0.45	&		&		&		&		\\
  sociality23	&	-2.43	&	2.04	&	0.23	&	sociality48	&	-0.87	&	1.5	&	0.56	&		&		&		&		\\
  sociality24	&	0.63	&	0.96	&	0.51	&	sociality49	&	-0.86	&	1.29	&	0.51	&		&		&		&		\\
  sociality25	&	-1.3	&	1.17	&	0.26	&	sociality50	&	0.28	&	0.75	&	0.71	&		&		&		&		\\
  sociality26	&	0.2	&	0.96	&	0.84	&	sociality51	&	-0.82	&	0.87	&	0.34	&		&		&		&		\\

   \hline\hline
\end{tabular}
\caption{Fitted model coefficents for sociality effects; L2 regularization parameter $\lambda=0.03998$.} 
\label{tab:soc_eff}
\end{table}

\subsubsection{Counterfactual Analysis}

One advantage of the generative, ERGM framework is the ability to gain insights by performing \emph{counterfactual analyses,} in which we perturb particular aspects of the model from their calibrated values and examine the resulting impact on social structure (an approach long used in agent-based modeling \citep{macy.willer:ars:2002}). Here, we provide some simple examples of such computational experiments.

We begin with the role of sociality effects.  These represent heterogeneity in individual friendship preferences, and given the relatively low level of observed variation, one might expect this heterogeneity to have a minimal impact on structure.  On the other hand, heterogeneity in incentives to pursue ties can potentially have non-trivial structural effects: for instance, because those with more ties are \emph{ceteris paribus} more likely to become tied to others with more ties, degree heterogeneity tends to create core-periphery structure, which in turn affects phenomena ranging from information diffusion to social influence \citep{friedkin:bk:1998}.  Although this impact of high degree preference is known, to what extent are those who do \emph{not} seek friendship critical to network structure?  We can probe this effect of removing below-reference preference variation by simply setting all negative sociality parameters to 0, and simulating from the resulting model.  While we would expect to see a \emph{somewhat} higher density under these conditions, the results are stark.  A typical realization is shown in the left-hand panel of Figure~\ref{fig:soc_eff_perturb}: we obtain an extremely dense, clique-like structure that is very far from what is empirically observed (or, as we can see from Figure~\ref{fig:adequacy}, is created from our fitted model).  

\begin{figure}
  \begin{center}
    \includegraphics[width=.45\textwidth]{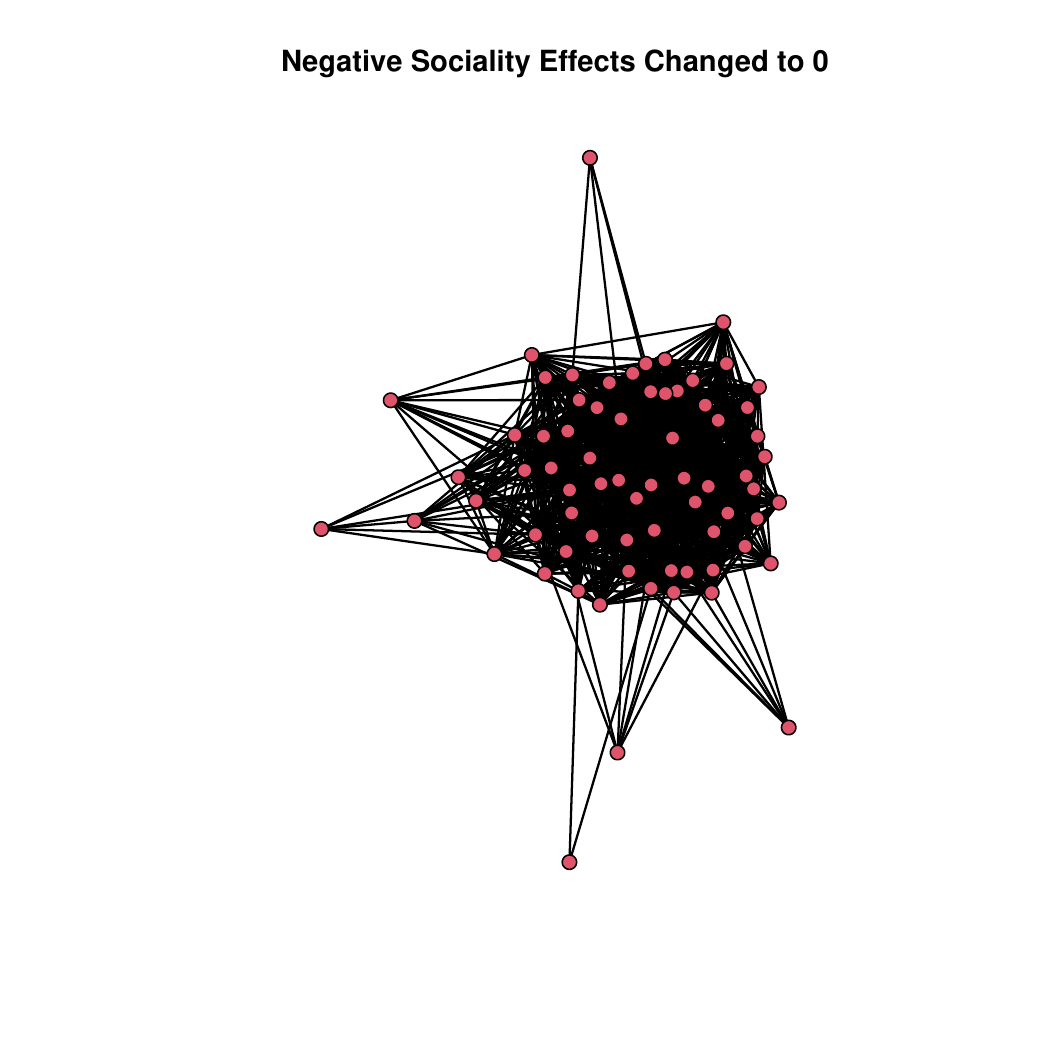} \includegraphics[width=.45\textwidth]{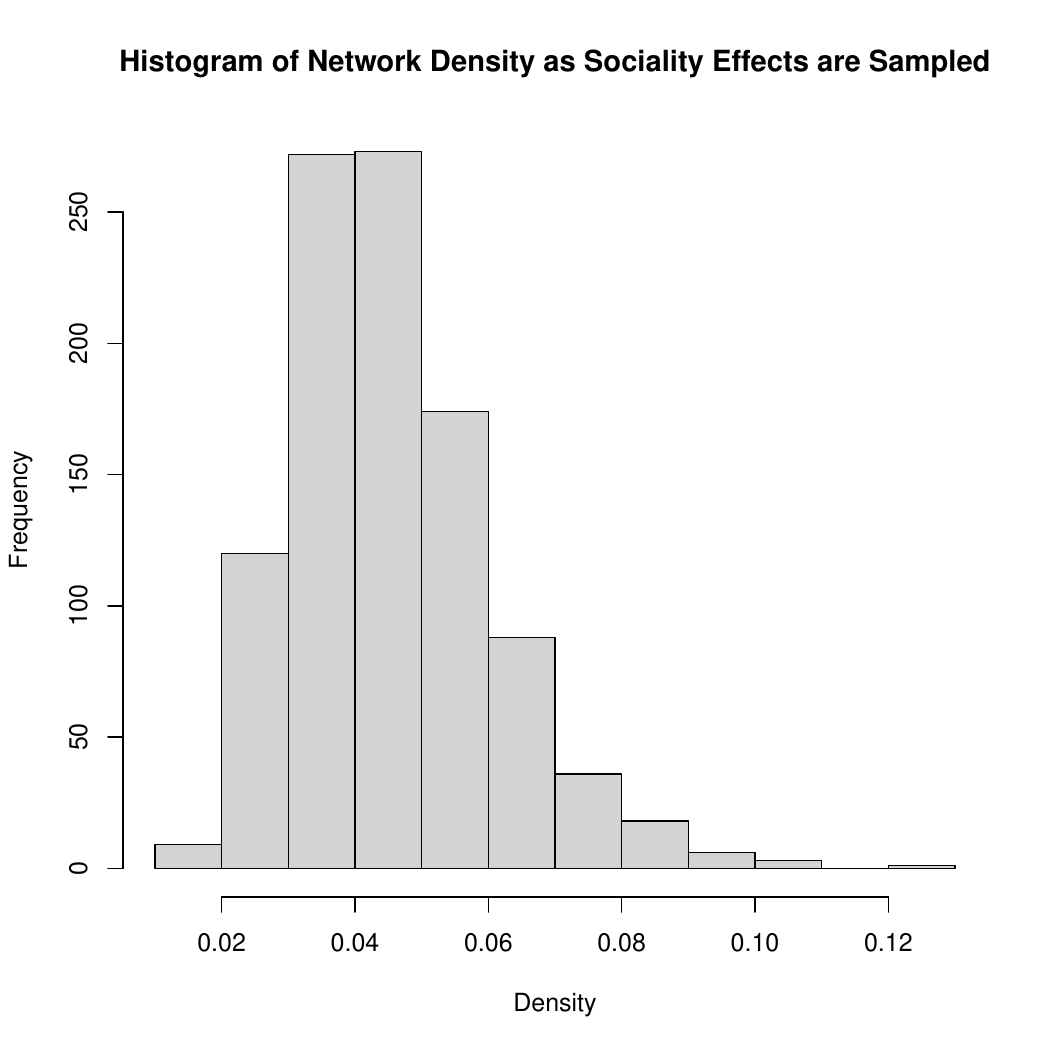}
    \caption{(Left) Representative draw from the friendship model with negative sociality terms removed.  ``Neutralizing'' those who place a lower value on friendship leads to radically increased levels of density over the entire network. (Right) The network density distribution under permuted sociality parameters (1,000 replicates).  While the alignment of friendship preferences with attributes can have some effect, densities remain relatively close to the observed level over all permutations. \label{fig:soc_eff_perturb}}
  \end{center}
\end{figure}

%\begin{figure}
%  \begin{center}
%    \includegraphics[width=.5\textwidth]{soc_eff_neg_to_0.pdf}
%    \caption{Representative draw from the friendship model with negative sociality terms removed.  ``Neutralizing'' those who place a lower value on friendship leads to radically increased levels of density over the entire network. \label{fig:soc_eff_neg_to_0}}
%  \end{center}
%\end{figure}

%\begin{figure}
%  \begin{center}
%    \includegraphics[width=.5\textwidth]{soc_eff_robustness.pdf}
%    \caption{ The density distribution of networks predicted by our
%      model when the sociality effects are permuted (1,000 replicates).  While the alignment of friendship preferences with attributes can have some effect, densities remain relatively close to the observed level over all permutations. \label{fig:soc_robust}}
%  \end{center}
%\end{figure}

We can thus see that those with low levels of interest in friendship play an important role in preventing density explosion.  Is this a generic effect, or does it depend upon a particular alignment between overall preferences for friendship and e.g. other attributes?  We probe this question by randomly permuting the sociality coefficients (holding all other factors equal), and simulating from the resulting models. The density distribution of the resulting networks is shown in the right-hand panel of Figure~\ref{fig:soc_eff_perturb}.  While the graph density varies across permutations, it is centered near the observed value of 0.0456, and at its most extreme does not exceed approximately .125 (for a mean degree of approximately 8.9).  This implies that, while fortuitous alignments of friendship preference and other attributes would be expected to increase or decrease density to a minor extent, heterogeneity per se is playing a role in preventing convergence to a high-density regime.  Taken together, it appears to be the case that having a significant fraction of nodes with relatively low levels of interest in forming at-work friendship ties acts as a ``buffer'' that prevents runaway growth of tight-knit friendship groups, and avoids a dramatic refactoring of network structure.  This effect of heterogeneity has not been well-studied, and is an interesting target for further investigation.

\begin{figure}
  \begin{center}
    \includegraphics[width=.45\textwidth]{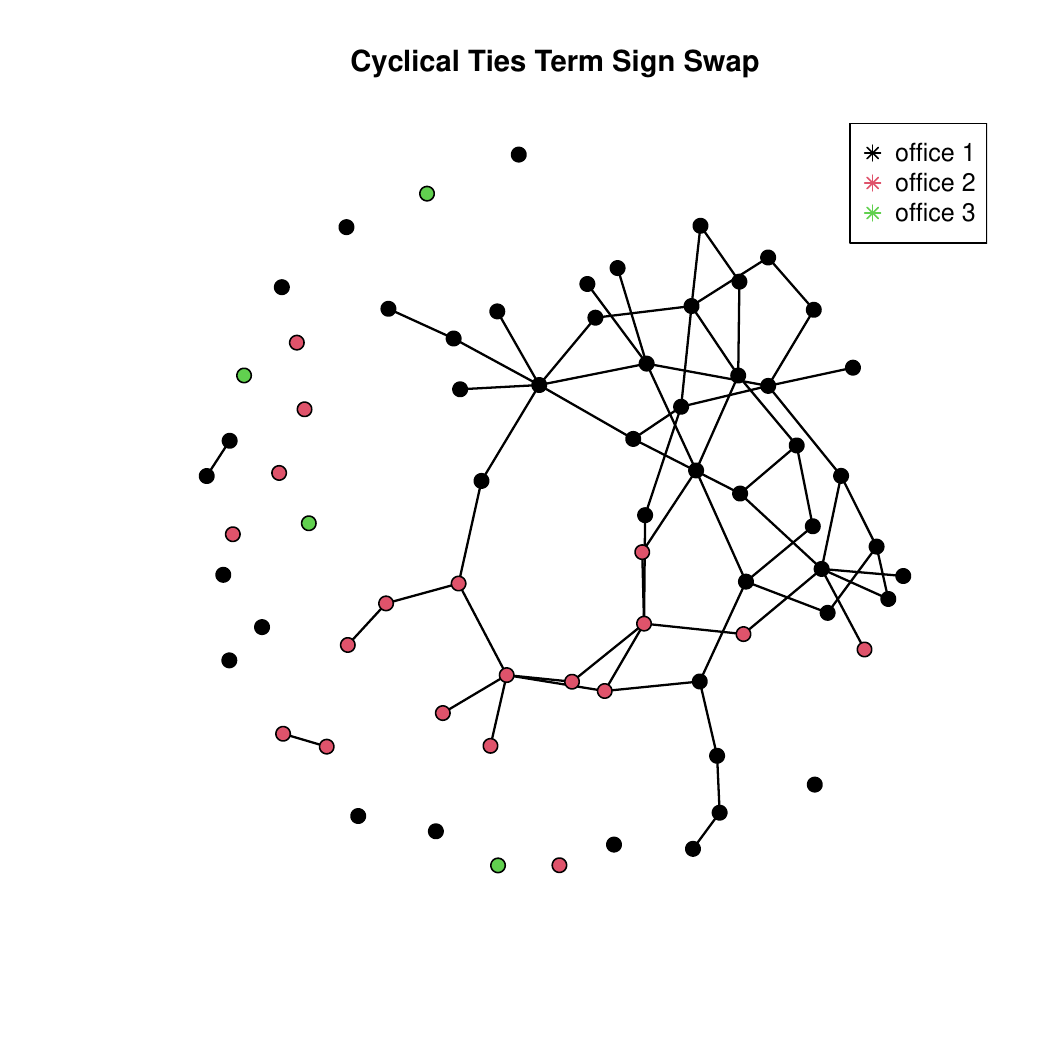} \includegraphics[width=.45\textwidth]{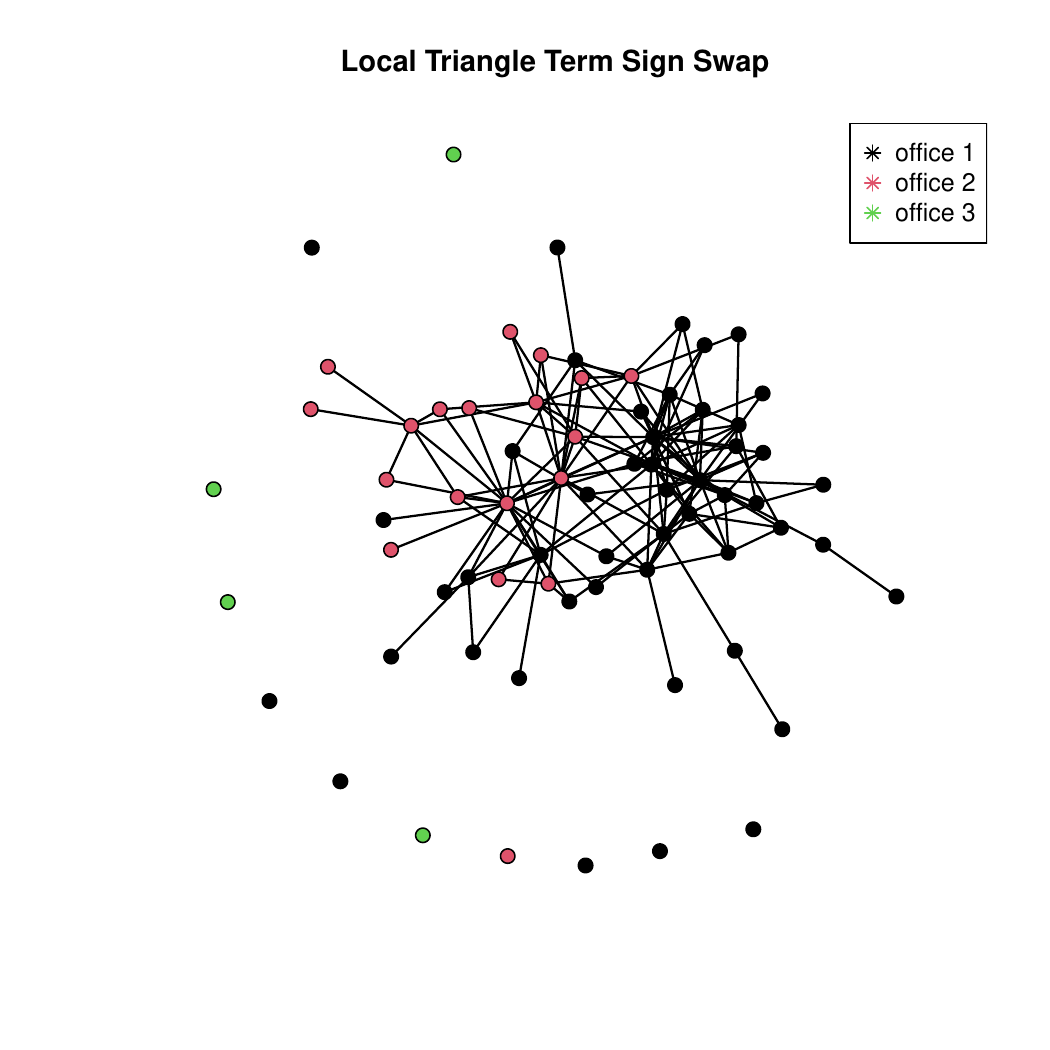}
    \caption{ Typical realizations for networks under the model with sign flipped cyclical tie (left) and local triangulation (right) parameters. Suppression of within-practice cyclic ties greatly suppresses local density, while local trangulation suppression has a weaker effect.   \label{fig:cyc_flip}}
  \end{center}
\end{figure}

While degree preference heterogeneity may play an important buffering role, closure effects are the major drivers of densification and the creation of local cohesion.  Within the friendship network model, this manifests in two significant effects: a preference for formation of ties to those in the same type of legal practice that are embedded in triangles; and an increasing preference for ties with co-workers as a function of the number of co-workers who are mutual friends.  Here, too, we can gain insights into the role of these partial utilities by perturbing the fitted model and examining the impact of perturbation of closure effects on network structure.  Figure~\ref{fig:cyc_flip} shows typical realizations for networks generated under two such perturbations: changing the sign of the cyclic tie parameter (left); and changing the sign of the local triangulation parameter (right).  In each case, we switch from a partial utility that favors closure to one that disfavors it; the consequences, however are interestingly distinct.  Disfavoring tie closure among those practicing the same type of law does not greatly reduce global connectivity within the network, but ``hollows out'' local clique structure, leaving a giant component that is largely biconnected via large cycles with few chords, decorated with small pendant trees.  Comparison with Figure~\ref{fig:lazega_orig} underscores the differences.  We can thus see that considerable ``in-filling'' of locally cohesive structure in the friendship network seems to be impacted by this effect.  By contrast, the impact of co-work based local triangulation is more subtle: the right-hand panel of Figure~\ref{fig:cyc_flip} indeed looks quite similar to the original graph.  Reversing the tendency for mutual co-working friends to incentivize closure has at best a localized density suppression effect, and does not radically alter network structure.  Thus, what may seem like very similar types of closure effects may have very different structural consequences, and hypothetical preference shifts based on different types of shared contexts (similar type of work versus working together) may not lead to similar outcomes.

\begin{figure}
  \begin{center}
    \includegraphics[width=.6\textwidth]{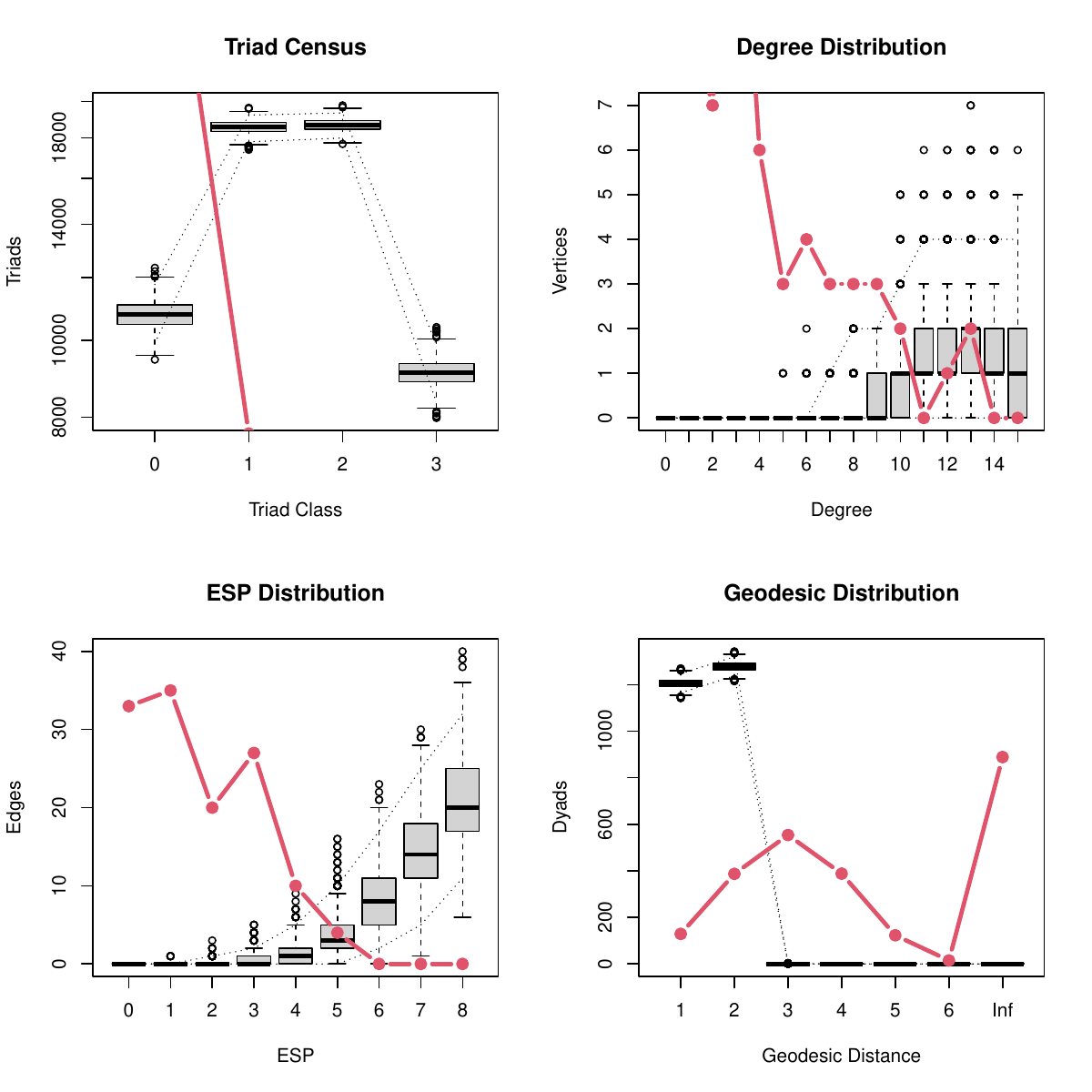}
    \caption{Consequences of shifting relational norms from symphonic to epibolic, while holding preferences constant.  The resulting network properties (represented by 95\% simulation intervals, 1,000 simulated replicates) represent a radically different social order from that observed (red). \label{fig:laz_normshift}}
  \end{center}
\end{figure}

The notion of a preference shift associated with one reference group versus another also raises the question of how more fundamental changes in relational norms would be expected to affect the network if preferences were held constant.  The Lazega friendship network is clearly symphonic; however, we can easily explore the consequences of equivalent preferences under an epibolic resolution function by simulating draws from an equivalent model with an appropriately shifted edge parameter.  Figure~\ref{fig:laz_normshift} shows the resulting impact on typical network properties, in the same form as the symphonic model of Figure~\ref{fig:adequacy}; as before, the observed network statistics are given in red, for reference.  Not surprisingly, changing the social ``rules'' in such a way as to allow anyone to impose friendship on anyone else increases mean degree.  It also greatly increases the depth of local triangulation (as seen in the ESP distribution), and reduces the median diameter of the network.  Plainly, such a network would not be sustainable as ``friendship'' in the sense that it would have been experienced by the members of Lazega's study community.  This, in turn, suggests how the mechanisms of normative relational definition may be maintained: frowning on imposition, and requiring mutuality to accept interactions keeps the network in check in ways that maintain its value.  By turns, if it were impossible to defend the symphonic norm (as e.g., in an online community in which technical changes make it impossible to screen out unwanted social interactions), it is plausible that the resulting structural changes would drive changes in relational preferences, which would themselves keep the network in check (or possibly force it into a new regime entirely).  Our model, calibrated on this single observation, cannot speak directly to how those changes might occur.  However, being able to investigate the immediate consequences of either preference or norm changes allows us to see which changes could be reasonably accommodated without radically altering social structure, and which changes would be expected to lead to changes that could potentially prove broadly destabilizing.  As we show below, when and whether norm shifts are destabilizing can depend on where one is in the space of potential preference functions, leading to structural effects that can be contingent on preference in a complex way.

\section{Application to Theoretical Studies of Small Group Structure} \label{sec_phase}

Small groups often exist in stable configurations for extended periods of time, before experiencing rapid change. For instance, members of a small hobby group or student organization may have regular, cohesive patterns of interactions that persist for months or even years, before suddenly collapsing in the face of seemingly minor changes in social conditions. An open source project might have a stable group of software developers who communicate regularly for many years, before a fragmenting into several (possibly competing) projects.  Such rapid shifts between radically different structural regimes can be viewed as phase transitions.  At one point a subject of considerable sociological interest (see e.g. \citet{fararo:bs:1978,fararo:bs:1984}), studies of phase behavior arguably waned; recent work on phase transitions in ERGMs \citep[e.g.][]{haggstrom.jonasson:jap:1999,radin.yin:aap:2013,grazioli.et.al:jpcB:2019,butts:jms:2021}, however, has once more brought them to the fore.  Ironically, some of this more recent work has been motivated by concerns about model specifications that produce to unwanted phase transitions (generally under the rubric of \emph{degeneracy}; see \citet{handcock:ch:2003,schweinberger:jasa:2011,butts:sm:2011b}), leading to a general view of such behavior as innately pathological.  However, the discovery of non-pathological phase transitions in both human and non-human networks has re-opened the notion of complex phase behavior as a potentially real and important phenomenon to be captured (rather than as a problem to be avoided).  Arguably, this is an especially natural perspective in a small group context, where radical shifts in group structure are not uncommon.  Groups coalesce from previously non-interacting members, later splintering again.  Extreme structures such as cliques, stars, or empty graphs can emerge and persist within small groups, unlike most large networks (the term ``clique'' itself being borrowed from a common small-group structure).  This suggests that ERGMs may be particularly useful as tools to study these types of emergent phenomena despite their historical lack of use in this area (though see \citet{von.et.al:sn:2021}).

Here, we show two examples of how the framework developed here can be used to examine phase behavior in small groups.  In particular, we leverage our ability to examine norm shifts in bilaterally controlled relations (i.e., where a symphonic relation is made epibolic or vice versa) to see how preferences and norms interact to stabilize or destablize particular types of structures.

\subsection{A Simple Model of Coalescence and De-coalescence Under Norm Shifts}

In small groups, it can be common for weak relations reflecting simple interaction or communication to be strongly cohesive (to the point of being complete, or nearly complete).  Indeed, such high-density interaction may be critical to group solidarity (and, for teams, performance).  For non-human social animals that form \emph{fission-fusion societies,} coalescence into high-density groups is part of a regular (usually diurnal) cycle, and can be essential for survival.  It is thus interesting to consider transitions between dense and sparse phases in small group settings.

While there are many mechanisms that can induce such phase transitions, we focus here on a particularly simple class (studied e.g. by \citet{handcock:ch:2003}).  Consider an undirected bilateral relation on a small group of $n$ nodes, where $V=A$, and where agent utilities have two components: a base cost of maintaining each edge; and a \emph{group synergy} that results from leveraging interactions among partners.  In particular, we presume that the utility change for agents $i$ and $j$ in adding a new $\{i,j\}$ tie is equal to $\theta_e + \theta_s (d_i + d_j)$, where $d_i$ and $d_j$ are the respective degrees of $i$ and $j$ prior to the addition.  This second term can be viewed as a first-order (i.e., linear) approximation to the synergy effect, with each existing tie increasing the potential value of adding a new tie.  The graph potential implementing such a utility function is $\rho(y) = \theta_e t_e(y) + \theta_s t_{2s}(y) + \log h(y)$, where $t_e$ is the edge count and $t_{2s}$ is the count of 2-stars (subgraphs in which one node is adjacent to two others).  Under epibolic edge formation norms (i.e., if either party can impose the relation on the other), $h(y) = \log(3) t_e(y)$, while symphonic norms imply $h(y)= - \log(3) t_e(y)$.  For our purposes, it is particularly interesting to consider the regime where interaction is costly, but synergies are present (e.g., $\theta_e<0$, $\theta_s>0$).

The edge/two-star model is known to undergo a density explosion phase transition \citep{handcock:ch:2003} in which, for sufficiently large $\theta_s$ (or $n$), the gains to group interaction overwhelm the costs, and the group converges to a highly dense phase.  Below this threshold, interaction costs dominate, and the relation is extremely sparse. To demonstrate this, we draw parameters for both $t_e$ and $t_{2s}$ from a uniform distribution ranging from -5 to 5, then calculate the density of a network simulated using the parameterized model. This procedure is then repeated 100,000 times. The left-hand panel of Figure~\ref{fig:small_group_test} illustrates the resulting density distribution, for a small ($n=7$) epibolic network.  %% \ctb{Describe the simulations, in enough detail that they could be repeated. -- Added. is 2 sentences above.}
As can be seen, for every partial edge utility, there exists a sharp transition in the partial utility of 2-stars ($\theta_s$) such that graphs above the critical value are nearly always complete (or nearly complete), and those below the value are nearly empty.  For such groups to remain cohesive, it is thus clear that they must remain above the transition line; even very small preference changes that take the group below the critical threshold will destabilize the group structure, leading to dissolution.

\begin{figure}
  \begin{center}
    \includegraphics[width=1\textwidth]{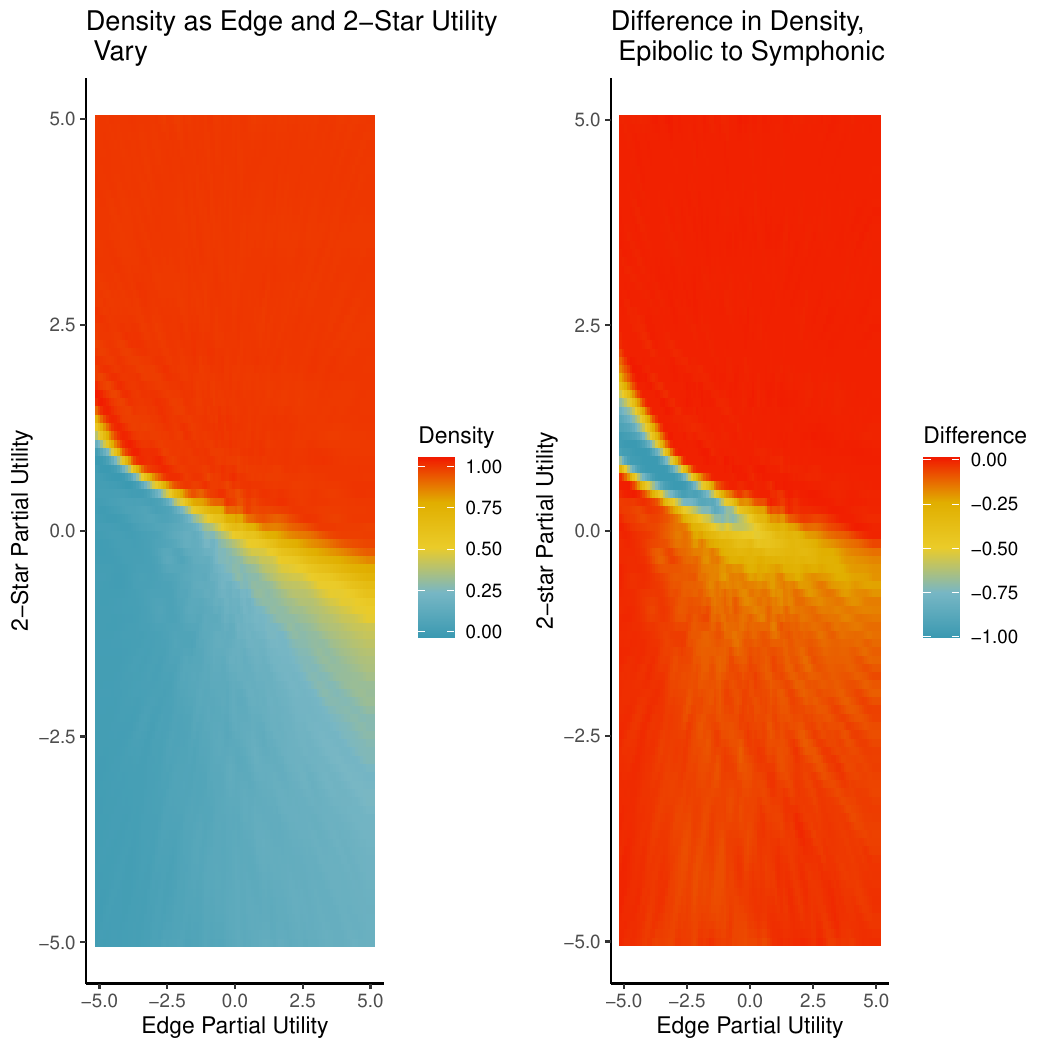}
    \caption{A simple simulation of a potential game ERGM, where the
      network is determined by two terms: A 2-star term and an edge
      term, both allowed to vary. Edge control is bilateral and
      epibolic (imposed by either party). The left plot displays how
      network density changes as 2-star and edge utility change under
      an epibolic regime. The right plot displays the difference in
      density surface as we move from an epibolic to a symphonic relational norm. %% \ctb{Title in the figure doesn't seem to be right - if density is going down, we are shifting from epibolic to symphonic. -- Should be fixed now}
      \label{fig:small_group_test}}
  \end{center}
\end{figure}

If small changes in preferences can have a radical effect, what of edge formation norms?  From the above, we can see that switching from epibolic to symphonic edge resolution has the impact of decreasing the edge parameter by $2 \log 3 \approx 2.2$, potentially altering the stability of the system.  The impact of this change for a network of size $n=7$ is shown in the right-hand panel of Figure~\ref{fig:small_group_test}.  For much of the preference space, there is little effect: the network remains either stably cohesive or stably empty under a norm shift.  However, as shown, there is a large diagonal band within the preference space where previously stable groups are disrupted by the norm shift.  

Although an intendedly simplistic model, this provides a number of insights.  It demonstrates that, in populations of groups, small differences in either base tie costs or the gains to group interaction can lead to large differences in group cohesion.  For currently stable (or unstable) groups near the phase transition, small shifts in preferences my lead to radical changes in group behavior; at the same time, however, quite large preferences changes may have little effect for groups far from the transition boundary.  Perhaps most interestingly, norm shifts may radically alter group stability, but only for groups that lie within a particular subspace of the preference landscape.  If such a shift occurs over a large population of groups (whether due to changes in interaction technology, policy, or culture), it may thus heavily impact some groups while leaving other, seemingly similar groups unchanged.

\subsection{Hierarchy Stabilization}

As a second example, we turn to a case studied by \citet{yu.et.al:siam:2021} on stabilization of extremely centralized group structures.  This case was motivated by studies of (often pathological) groups controlled by a central leader or leadership team, in which the leader interacts with all members but discourages them from having strong side interactions that are unmediated by him or her.  (Such groups are sometimes referred to popularly as ``cults,'' though this term is disfavored by many researchers.)  Yu et al. posit a minimal model for sustaining ``cult''-like star structures, analytically deriving conditions under which such stars are \emph{locally stable} under this model family (i.e., no network formed by a single edge change is more favorable than the star itself).  Put in behavioral terms, we may consider the Yu et al. model to involve an undirected, symmetric relation with $V=A$, in which agent utilities have two contributions: a base edge cost, and a penalty for being part of a null relation that is unbridged by some third party.  This last can be understood as a ``distrust of strangers'' condition, in which group members find it aversive to be in a group with someone who is neither a direct contact, nor is at least bridged by a shared partner.  A group leader seeking to maintain their position may plausibly encourage such distrust, while also raising the cost of interaction; done properly, this may allow him or her to maintain his or her position for long periods of time.

\begin{figure}
  \begin{center}
    \includegraphics[width=1\textwidth]{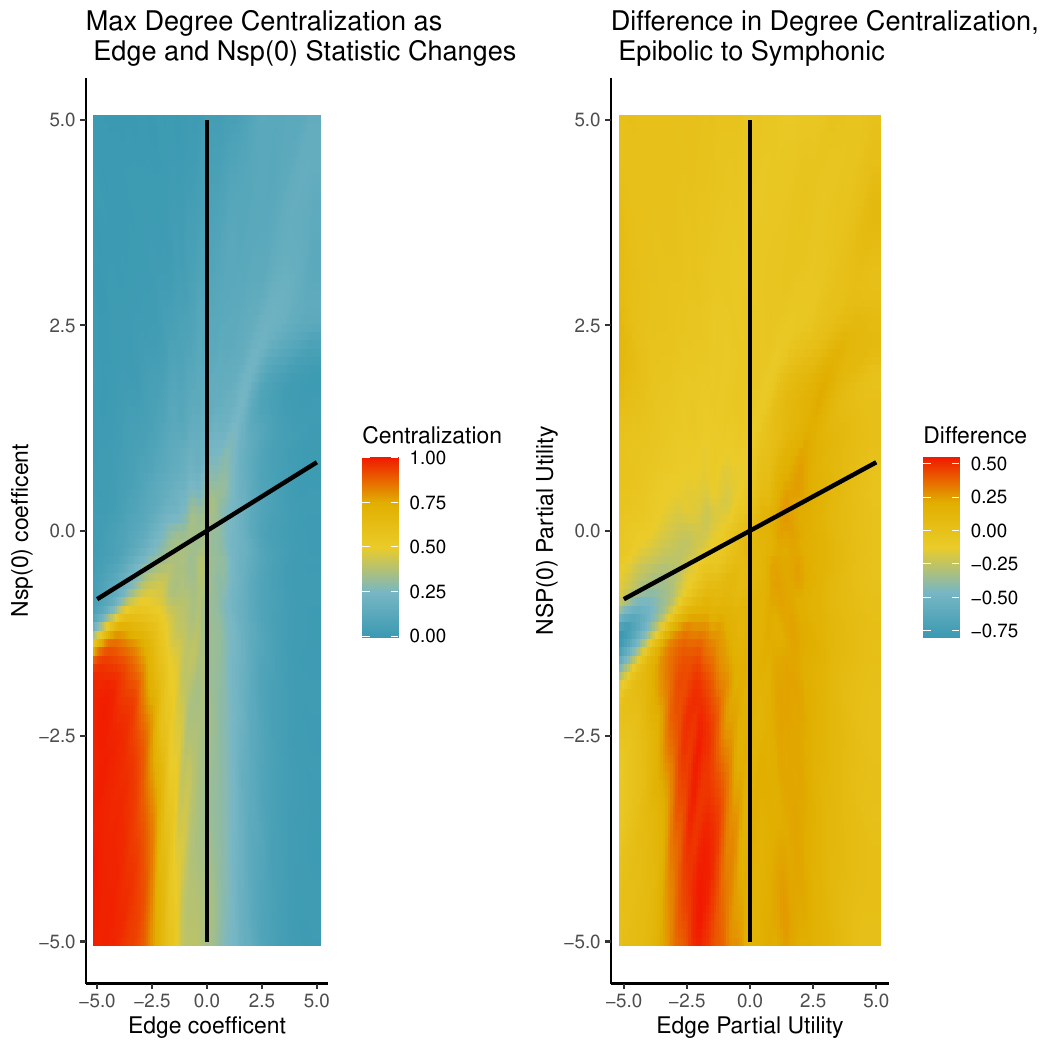}
    \caption{ (left) Centralization by partial utilities in the epibolic cult model; locally stable region for the star network indicated by black lines (lower left quadrant).  A sharp transition separates star-stabilizing preferences from those leading to unstructured or non-cohesive groups.  (right) Change in centralization when passing from epibolic to symphonic norms.  Groups in certain regions of preference space experience either radical enhancement or reduction of centralized structures, while others are unaffected.  \label{fig:cult_test}}
\end{center}
\end{figure}

The ERGM form that implements the above model has two terms: $t_e$, and $t_{NSP0}$ (the former being the edge count, and the latter being the count of null dyads with no shared partners in common (NSP(0))  \citet{yu.et.al:siam:2021} show that the locally stable regime arises as a subset of the parameter space in which both parameters are negative.  To investigate the behavior of this model, we draw both $t_e$ and $t_{NSP0}$ from a uniform distribution ranging from -5 to 5, then calculate the centralization of a network simulated using the parameterized model. This procedure is then repeated 100,000 times%% .\ctb{describe simulation study in enough detail for replication-- should be done (is sentence before thie)}
.  The behavior of the model in the epibolic case is shown in the left-hand panel of Figure~\ref{fig:cult_test}.  Although less sharp than the cohesion case above, we see a relatively clean transition between a highly centralized phase and a highly decentralized phase (with the former residing inside the stable cone).  We can thus see that the success of a hypothesized leader in maintaining centralization depends on preventing behavior from crossing the phase boundary, with the likely impact being sudden decentralization.  Figure~\ref{fig:cult_examples} shows examples of typical realizations from each quadrant of the preference space; transitions that make edge formation more favorable lead to dense, decentralized groups, while those that make NSP(0)s more favorable without enhancing edge favorability lead to empty graphs (and, likely, the dissolution of the group).  Different types of preference shifts thus lead do different failure modes (at least, from the perspective of the leader).

As with the cohesion case, it is also interesting to consider the impact of norm shifts on this network.  This is shown in the right-hand panel of Figure~\ref{fig:cult_test}.  Like the edge/2-star family, norm shifts edge/NSP(0) family have little effect over much of the preference space.  However, there are exceptions: we see a large vertical band (in red) where switching to a symphonic norm increases centralization, marking a potentially vulnerable region that includes a very wide range of partial utilities for NSP(0), but a relatively narrow range for edge costs.  We also see a smaller region of centralization loss, present near the edge of the stable cone.  It is thus possible for the same norm shift to have different directions of effect on different groups, depending on where they lie in the preference space.

%% The results in Figure \ref{fig:cult_test} suggest that under a
%% bilateral symphonic (mutual agreement) control regime, a stable cult
%% is only likely to exist when in a single region of model
%% coefficients. When we change the control regime to one where the cult
%% leader can forbid certain edges from existing (blending a bilateral
%% symphonic rule with an epibolic one), the size of this region
%% increases, as the two well connected but decentralized regions become
%% more centralized (but less connected) due to the cult leader vetoing
%% edges between regular members.

\begin{figure}
\begin{center}
    \includegraphics[width=1\textwidth]{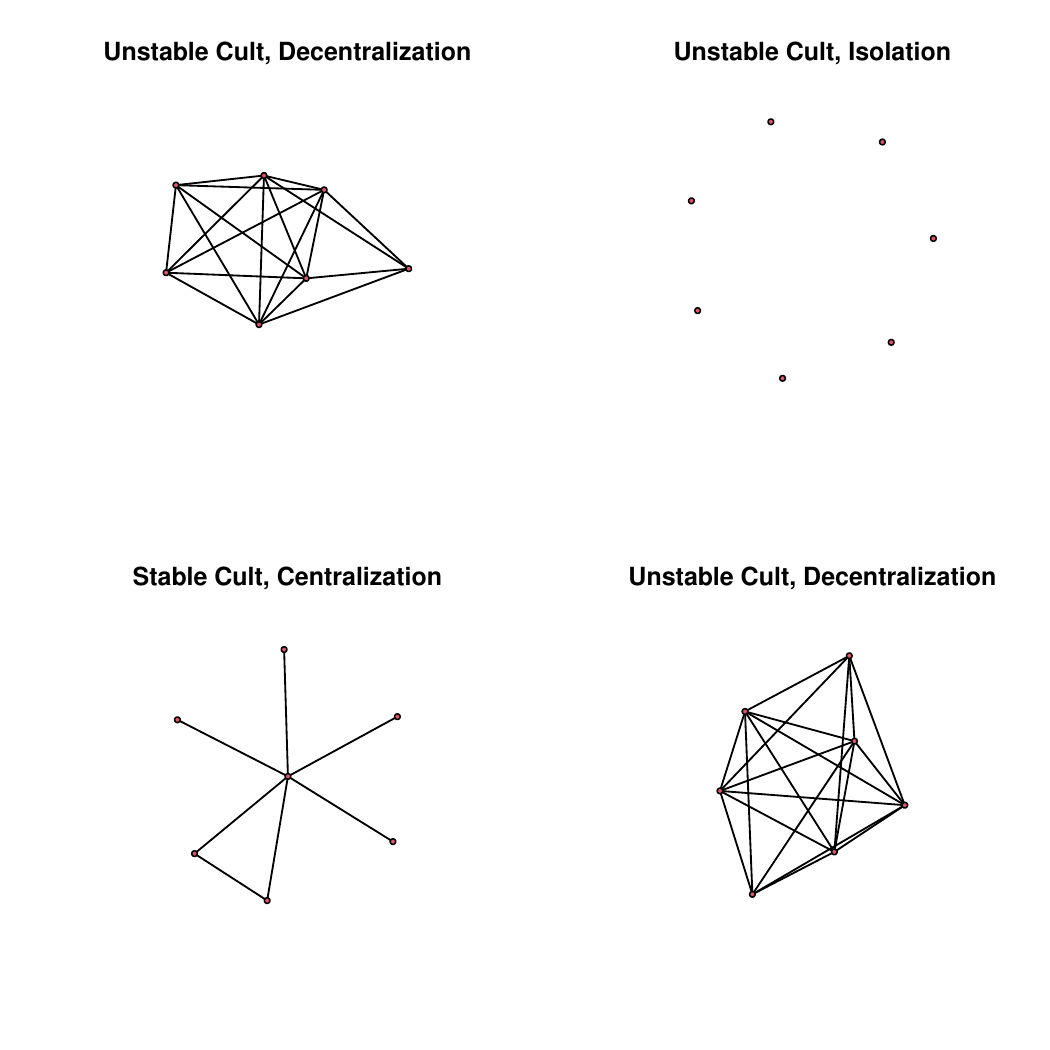}
    \caption{Four example networks produced by different pairs of
      coefficients in Figure \ref{fig:cult_test}. In order from
      left-to-right, $(-2.5, 2.5)$, $(2.5, 2.5)$, $(-2.5, -2.5)$,
      $(2.5, -2.5)$. Each of these coefficients are characteristic of
      their respective ``region'' (separated from one another by black
      lines). Two of these sets of coefficients result in an unstable
      and decentralized cult, where the leader is unable to force
      members to reach one another through the leader. One pair of
      coefficients results in a ``cult'' with few to no ties between
      members (a state of isolation). Only one set of coefficients
      results in a stable cult, where the leader is central, and most
      members must go through the leader to reach other
      members. \label{fig:cult_examples}}
\end{center}
\end{figure}

%SO MANY THINGS WE DON'T HAVE TIME TO TALK ABOUT.... -CTB
%\section{Formal Relationships}
%
%\subsection{Utility, Energy, and Temperature}
%
%\subsection{Degeneracy and Nash Equilibria}
%
%\subsection{Utility Functions and Dependence Graphs}

\section{Conclusion} \label{sec_conclusion}

In this paper, we have provided a basic micro-foundation for cross-sectional network models, based on a stochastic choice process.  While not applicable in all circumstances, behaviorally plausible conditions do exist for which this process will give rise to well-defined equilibrium behavior, and for which said equilibrium can be expressed in terms of individual utilities.  The presence of such a relationship allows for estimation and comparison of models for agent preferences from cross-sectional data, without explicit knowledge of the underlying dynamic process.  It is hoped that the results shown here will lead to further investigation of the relationship between choice processes and social networks, and to a clearer understanding of what can (and cannot) be inferred from cross-sectional network data.

%---Uncomment if we need to put the references on a separate page.
%\clearpage 
%\section{References}  
\bibliography{ctb,myref}

\begin{thebibliography}{}

\bibitem[Barndorff-Nielsen, 1978]{barndorffnielsen:bk:1978}
Barndorff-Nielsen, O. (1978).
\newblock {\em Information and Exponential Families in Statistical Theory}.
\newblock John Wiley and Sons, New York.

\bibitem[Besag, 1974]{besag:jrssB:1974}
Besag, J. (1974).
\newblock Spatial interaction and the statistical analysis of lattice systems.
\newblock {\em Journal of the Royal Statistical Society, Series B},
  36(2):192--236.

\bibitem[Brown, 1986]{brown:bk:1986}
Brown, L.~D. (1986).
\newblock {\em Fundamentals of Statistical Exponential Families, with
  Applications in Statistical Decision Theory}.
\newblock Institute of Mathematical Statistics, Hayward, CA.

\bibitem[Butts, 2007]{butts:ch:2007}
Butts, C.~T. (2007).
\newblock Statistical mechanical models for social systems.
\newblock In Bejan, A. and Merkx, G.~W., editors, {\em Constructal Theory of
  Social Dynamics}, pages 197--224. Springer, New York.

\bibitem[Butts, 2008a]{butts:jss:2008a}
Butts, C.~T. (2008a).
\newblock network: a package for managing relational data in {R}.
\newblock {\em Journal of Statistical Software}, 24(2).

\bibitem[Butts, 2008b]{butts:sm:2008}
Butts, C.~T. (2008b).
\newblock A relational event framework for social action.
\newblock {\em Sociological Methodology}, 38(1):155--200.

\bibitem[Butts, 2008c]{butts:jss:2008b}
Butts, C.~T. (2008c).
\newblock Social network analysis with sna.
\newblock {\em Journal of Statistical Software}, 24(6).

\bibitem[Butts, 2011]{butts:sm:2011b}
Butts, C.~T. (2011).
\newblock {B}ernoulli graph bounds for general random graphs.
\newblock {\em Sociological Methodology}, 41:299--345.

\bibitem[Butts, 2015]{butts:jms:2015}
Butts, C.~T. (2015).
\newblock A novel simulation method for binary discrete exponential families,
  with application to social networks.
\newblock {\em Journal of Mathematical Sociology}, 39(3):174--202.

\bibitem[Butts, 2018]{butts:jms:2018}
Butts, C.~T. (2018).
\newblock A perfect sampling method for exponential family random graph models.
\newblock {\em Journal of Mathematical Sociology}, 42(1):17--36.

\bibitem[Butts, 2019]{butts:jms:2019}
Butts, C.~T. (2019).
\newblock A dynamic process interpretation of the sparse {ERGM} reference
  model.
\newblock {\em Journal of Mathematical Sociology}, 43(1):40--57.

\bibitem[Butts, 2021]{butts:jms:2021}
Butts, C.~T. (2021).
\newblock Phase transitions in the edge/concurrent vertex model.
\newblock {\em Journal of Mathematical Sociology}, 43(3):135--147.

\bibitem[Butts, 2022]{butts:jms:2022}
Butts, C.~T. (2022).
\newblock A dynamic process reference model for sparse networks with
  reciprocity.
\newblock {\em Journal of Mathematical Sociology}, 46(1):1--27.

\bibitem[Butts, 2024]{butts:jms:2024}
Butts, C.~T. (2024).
\newblock Continuous time graph processes with known {ERGM} equilibria:
  Contextual review, extensions, and synthesis.
\newblock {\em Journal of Mathematical Sociology}, 48(2):129--171.

\bibitem[Butts and Carley, 2007]{butts.carley:jms:2007}
Butts, C.~T. and Carley, K.~M. (2007).
\newblock Structural change and homeostasis in organizations: A
  decision-theoretic approach.
\newblock {\em Journal of Mathematical Sociology}, 31(4):295--321.

\bibitem[Carley, 1991]{carley:asr:1991}
Carley, K.~M. (1991).
\newblock A theory of group stability.
\newblock {\em American Sociological Review}, 56(3):331--354.

\bibitem[Crouch et~al., 1998]{crouch.et.al:pres:1998}
Crouch, B., Wasserman, S., and Trachtenburg, F. (1998).
\newblock {M}arkov chain {M}onte {C}arlo maximum likelihood estimation for $p*$
  social network models.

\bibitem[Diesser et~al., 2023]{diessner.et.al:jpcB:2023}
Diesser, E.~M., Freites, J.~A., Tobias, D.~J., and Butts, C.~T. (2023).
\newblock Network {H}amiltonian models for unstructured protein aggregates,
  with application to $\gamma${D}-crystallin.
\newblock {\em Journal of Physical Chemistry, B}, 127(3):685--697.

\bibitem[Diessner et~al., 2024]{diessner.et.al:jctc:2024}
Diessner, E.~M., Thomas, L.~J., and Butts, C.~T. (2024).
\newblock Production of distinct fibrillar, oligomeric, and other aggregation
  states from network models of multibody interaction.
\newblock {\em Journal of Chemical Theory and Computation}, 20(18):7829--7840.

\bibitem[Dunbar, 1992]{dunbar:jhe:1992}
Dunbar, R.~I. (1992).
\newblock Neocortex size as a constraint on group size in primates.
\newblock {\em Journal of Human Evolution}, 22(6):469--493.

\bibitem[Fararo, 1978]{fararo:bs:1978}
Fararo, T.~J. (1978).
\newblock An introduction to catastrophes.
\newblock {\em Behavioral Science}, 23:291--317.

\bibitem[Fararo, 1984]{fararo:bs:1984}
Fararo, T.~J. (1984).
\newblock Critique and comment: Catastrophe analysis of the {S}imon-{H}omans
  model.
\newblock {\em Behavioral Science}, 29:212--216.

\bibitem[Frank and Strauss, 1986]{frank.strauss:jasa:1986}
Frank, O. and Strauss, D. (1986).
\newblock {M}arkov graphs.
\newblock {\em Journal of the American Statistical Association}, 81:832--842.

\bibitem[Freidkin, 1998]{friedkin:bk:1998}
Freidkin, N. (1998).
\newblock {\em A Structural Theory of Social Influence}.
\newblock Cambridge University Press, Cambridge.

\bibitem[Gaonkar and Mele, 2021]{gaonkar2021model}
Gaonkar, S. and Mele, A. (2021).
\newblock A model of inter-organizational network formation.
\newblock {\em arXiv preprint arXiv:2105.00458}.

\bibitem[Gilks et~al., 1996]{gilks.et.al:ch:1996}
Gilks, W.~R., Richardson, S., and Spiegelhalter, D.~J. (1996).
\newblock Intoducing {M}arkov chain {M}onte {C}arlo.
\newblock In Gilks, W.~R., Richardson, S., and Spiegelhalter, D.~J., editors,
  {\em Markov Chain Monte Carlo in Practice}, pages 1--20. Chapman and Hall,
  London.

\bibitem[Goodreau et~al., 2008]{goodreau.et.al:jss:2008}
Goodreau, S.~M., Handcock, M.~S., Hunter, D.~R., Butts, C.~T., and Morris, M.
  (2008).
\newblock A statnet tutorial.
\newblock {\em Journal of Statistical Software}, 24(9).

\bibitem[Granovetter, 1973]{granovetter:ajs:1973}
Granovetter, M. (1973).
\newblock The strength of weak ties.
\newblock {\em American Journal of Sociology}, 78(6):1369--1380.

\bibitem[Grazioli et~al., 2019]{grazioli.et.al:jpcB:2019}
Grazioli, G., Yu, Y., Unhelkar, M.~H., Martin, R.~W., and Butts, C.~T. (2019).
\newblock Network-based classification and modeling of amyloid fibrils.
\newblock {\em Journal of Physical Chemistry, B}, 123(26):5452--5462.

\bibitem[H\"{a}ggstr\"{o}m and Jonasson, 1999]{haggstrom.jonasson:jap:1999}
H\"{a}ggstr\"{o}m, O. and Jonasson, J. (1999).
\newblock Phase transition in the random triangle model.
\newblock {\em Journal of Applied Probability}, 36:1101--1115.

\bibitem[Handcock, 2003]{handcock:ch:2003}
Handcock, M.~S. (2003).
\newblock Statistical models for social networks: Inference and degeneracy.
\newblock In Breiger, R., Carley, K.~M., and Pattison, P., editors, {\em
  Dynamic Social Network Modeling and Analysis}, pages 229--240. National
  Academies Press, Washington, DC.

\bibitem[Handcock et~al., 2008]{handcock.et.al:jss:2008}
Handcock, M.~S., Hunter, D.~R., Butts, C.~T., Goodreau, S.~M., and Morris, M.
  (2008).
\newblock {statnet}: Software tools for the representation, visualization,
  analysis and simulation of network data.
\newblock {\em Journal of Statistical Software}, 24(1):1--11.

\bibitem[Holland and Leinhardt, 1981]{holland.leinhardt:jasa:1981}
Holland, P.~W. and Leinhardt, S. (1981).
\newblock An exponential family of probability distributions for directed
  graphs (with discussion).
\newblock {\em Journal of the American Statistical Association},
  76(373):33--50.

\bibitem[Hummon and Fararo, 1995]{hummon.fararo:sn:1995}
Hummon, N.~P. and Fararo, T.~J. (1995).
\newblock Actors and networks as objects.
\newblock {\em Social Networks}, 17:1--26.

\bibitem[Hunter and Handcock, 2006]{hunter.handcock:jcgs:2006}
Hunter, D.~R. and Handcock, M.~S. (2006).
\newblock Inference in curved exponential family models for networks.
\newblock {\em Journal of Computational and Graphical Statistics}, 15:565--583.

\bibitem[Hunter et~al., 2008]{hunter.et.al:jss:2008}
Hunter, D.~R., Handcock, M.~S., Butts, C.~T., Goodreau, S.~M., and Morris, M.
  (2008).
\newblock ergm: A package to fit, simulate and diagnose exponential-family
  models for networks.
\newblock {\em Journal of Statistical Software}, 24(3).

\bibitem[Hunter et~al., 2012]{hunter.et.al:jcgs:2012}
Hunter, D.~R., Krivitsky, P.~N., and Schweinberger, M. (2012).
\newblock Computational statistical methods for social network analysis.
\newblock {\em Journal of Computational and Graphical Statistics}, 21:856--882.

\bibitem[Koskinen and Lomi, 2013]{koskinen.lomi:jsp:2013}
Koskinen, J. and Lomi, A. (2013).
\newblock The local structure of globalization: The network dynamics of foreign
  direct investments in the international electricity industry.
\newblock {\em Journal of Statistical Physics}, 151:523--548.

\bibitem[Krackhardt, 1988]{krackhardt:sn:1988}
Krackhardt, D. (1988).
\newblock Predicting with networks: Nonparametric multiple regression analyses
  of dyadic data.
\newblock {\em Social Networks}, 10:359--382.

\bibitem[Krackhardt, 1992]{krackhardt:ch:1992}
Krackhardt, D. (1992).
\newblock The strength of strong ties: the importance of philos in
  organizations.
\newblock In Nohria, N. and Eccles, R., editors, {\em Networks and
  Organizations: Structures, Form, and Action}, pages 216--239. Harvard
  Business Press, Boston, MA.

\bibitem[Krivitsky et~al., 2011]{krivitsky.et.al:statm:2011}
Krivitsky, P.~N., Handcock, M.~S., and Morris, M. (2011).
\newblock Adjusting for network size and composition effects in
  exponential-family random graph models.
\newblock {\em Statistical Methodology}, 8(4):319--339.

\bibitem[Krivitsky et~al., 2023]{krivitsky.et.al:jss:2023}
Krivitsky, P.~N., Hunter, D.~R., Morris, M., and Klumb, C. (2023).
\newblock {ergm} 4: New features for analyzing exponential-family random graph
  models.
\newblock {\em Journal of Statistical Software}, 105(6):1--44.

\bibitem[Lazega, 1992]{lazega1992analyse}
Lazega, E. (1992).
\newblock Analyse de r{\'e}seaux d'une organisation coll{\'e}giale: les avocats
  d'affaires.
\newblock {\em Revue fran{\c{c}}aise de sociologie}, pages 559--589.

\bibitem[Lee and Butts, 2018]{lee.butts:sn:2018}
Lee, F. and Butts, C.~T. (2018).
\newblock Mutual assent or unilateral nomination? a performance comparison of
  intersection and union rules for integrating self-reports of social
  relationships.

\bibitem[Lerner et~al., 2013]{lerner.et.al:jmp:2013}
Lerner, J., Indlekofer, N., Nick, B., and Brandes, U. (2013).
\newblock Conditional independence in dynamic networks.
\newblock {\em Journal of Mathematical Psychology}, 57(6):275 -- 283.

\bibitem[Lusher et~al., 2012]{lusher.et.al:bk:2012}
Lusher, D., Koskinen, J., and Robins, G. (2012).
\newblock {\em Exponential Random Graph Models for Social Networks: Theory,
  Methods, and Applications}.
\newblock Cambridge University Press, Cambridge.

\bibitem[Macy and Willer, 2002]{macy.willer:ars:2002}
Macy, M.~W. and Willer, R. (2002).
\newblock From factors to actors: Computational sociology and agent-based
  modeling.
\newblock {\em Annual Review of Sociology}, 21:143--166.

\bibitem[Mayhew et~al., 1995]{mayhew.et.al:sf:1995}
Mayhew, B.~H., McPherson, J.~M., Rotolo, M., and Smith-Lovin, L. (1995).
\newblock Sex and race homogeneity in naturally occurring groups.
\newblock {\em Social Forces}, 74(1):15--52.

\bibitem[McFadden, 1973]{mcfadden:ch:1973}
McFadden, D. (1973).
\newblock Conditional logit analysis of qualitative choice behavior.
\newblock In Zarembka, P., editor, {\em Frontiers in Econometrics}. Academic
  Press.

\bibitem[McPherson et~al., 2001]{mcpherson.et.al:ars:2001}
McPherson, J.~M., Smith-Lovin, L., and Cook, J.~M. (2001).
\newblock Birds of a feather: Homophily in social networks.
\newblock {\em Annual Review of Sociology}, 27:415--444.

\bibitem[Mele, 2017]{mele2017structural}
Mele, A. (2017).
\newblock A structural model of dense network formation.
\newblock {\em Econometrica}, 85(3):825--850.

\bibitem[Mele, 2022]{mele2022structural}
Mele, A. (2022).
\newblock A structural model of homophily and clustering in social networks.
\newblock {\em Journal of Business \& Economic Statistics}, 40(3):1377--1389.

\bibitem[Monderer and Shapley, 1996]{monderer.shapley:geb:1996}
Monderer, D. and Shapley, L.~S. (1996).
\newblock Potential games.
\newblock {\em Games and Economic Behavior}, 14:124--143.

\bibitem[Pattison and Robins, 2002]{pattison.robins:sm:2002}
Pattison, P.~E. and Robins, G.~L. (2002).
\newblock Neighborhood-based models for social networks.
\newblock {\em Sociological Methodology}, 32:301--337.

\bibitem[Pattison and Wasserman, 1999]{pattison.wasserman:bjmsp:1999}
Pattison, P.~E. and Wasserman, S. (1999).
\newblock Logit models and logistic regressions for social networks: {II}.
  multivariate relations.
\newblock {\em British Journal of Mathematical and Statistical Psychology},
  52:169--193.

\bibitem[{R Core Team}, 2026]{rteam:sw:2026}
{R Core Team} (2026).
\newblock {\em R: A Language and Environment for Statistical Computing}.
\newblock R Foundation for Statistical Computing, Vienna, Austria.

\bibitem[Radin and Yin, 2013]{radin.yin:aap:2013}
Radin, C. and Yin, M. (2013).
\newblock Phase transitions in exponential random graphs.
\newblock {\em Annals of Applied Probability}, 23(6):2458--2471.

\bibitem[Robins and Pattison, 2001]{robins.pattison:jms:2001}
Robins, G.~L. and Pattison, P.~E. (2001).
\newblock Random graph models for temporal processes in social networks.
\newblock {\em Journal of Mathematical Sociology}, 25:5--41.

\bibitem[Robins et~al., 1999]{robins.et.al:p:1999}
Robins, G.~L., Pattison, P.~E., and Wasserman, S. (1999).
\newblock Logit models and logistic regressions for social networks, {III}.
  valued relations.
\newblock {\em Psychometrika}, 64:371--394.

\bibitem[Robins et~al., 2005]{robins.et.al:ajs:2005}
Robins, G.~L., Pattison, P.~E., and Woolcock, J. (2005).
\newblock Small and other worlds: Network structures from local processes.
\newblock {\em American Journal of Sociology}, 110(4):894--936.

\bibitem[Schweinberger, 2011]{schweinberger:jasa:2011}
Schweinberger, M. (2011).
\newblock Instability, sensitivity, and degeneracy of discrete exponential
  families.
\newblock {\em Journal of the American Statistical Association},
  106:1361--1370.

\bibitem[Schweinberger et~al., 2020]{schweinberger.et.al:ss:2020}
Schweinberger, M., Krivitsky, P.~N., Butts, C.~T., and Stewart, J. (2020).
\newblock Exponential-family models of random graphs: Inference in finite-,
  super-, and infinite-population scenarios.
\newblock {\em Statistical Science}, 35(4):627--662.

\bibitem[Skvoretz et~al., 1996]{skvoretz.et.al:jms:1996}
Skvoretz, J., Faust, K., and Fararo, T. (1996).
\newblock Social structure, networks, and $e$-state structuralism models.
\newblock {\em Journal of Mathematical Sociology}, 21:57--76.

\bibitem[Snijders, 1996]{snijders:jms:1996}
Snijders, T. A.~B. (1996).
\newblock Stochastic actor-oriented models for network change.
\newblock {\em Journal of Mathematical Sociology}, 23:149--172.

\bibitem[Snijders, 2001]{snijders:sm:2001}
Snijders, T. A.~B. (2001).
\newblock The statistical evaluation of social network dynamics.
\newblock {\em Sociological Methodology}, 31:361--395.

\bibitem[Snijders, 2002]{snijders:joss:2002}
Snijders, T. A.~B. (2002).
\newblock {M}arkov chain {M}onte {C}arlo estimation of exponential random graph
  models.
\newblock {\em Journal of Social Structure}, 3(2).

\bibitem[Snijders, 2005]{snijders:ch:2005}
Snijders, T. A.~B. (2005).
\newblock Models for longitudinal network data.
\newblock In Carrington, P.~J., Scott, J., and Wasserman, S., editors, {\em
  Models and Methods in Social Network Analysis}, pages 215--247. Cambridge
  University Press, New York.

\bibitem[Snijders et~al., 2006]{snijders.et.al:sm:2006}
Snijders, T. A.~B., Pattison, P.~E., Robins, G.~L., and Handcock, M.~S. (2006).
\newblock New specifications for exponential random graph models.
\newblock {\em Sociological Methodology}, 36:99{--}154.

\bibitem[Strauss, 1986]{strauss:siam:1986}
Strauss, D. (1986).
\newblock On a general class of models for interaction.
\newblock {\em SIAM Review}, 28(4):513--527.

\bibitem[Strauss and Ikeda, 1990]{strauss.ikeda:jasa:1990}
Strauss, D. and Ikeda, M. (1990).
\newblock Pseudolikelihood estimation for social networks.
\newblock {\em Journal of the American Statistical Association},
  85(409):204--212.

\bibitem[{van Duijn} et~al., 2009]{vanduijn.et.al:sn:2009}
{van Duijn}, M.~A., Gile, K.~J., and Handcock, M.~S. (2009).
\newblock A framework for the comparison of maximum pseudo-likelihood and
  maximum likelihood estimation of exponential family random graph models.
\newblock {\em Social Networks}, 31(1):52--62.

\bibitem[{Vega Yon} et~al., 2021]{von.et.al:sn:2021}
{Vega Yon}, G.~G., Slaughter, A., and {de la Haye}, K. (2021).
\newblock Exponential random graph models for little networks.
\newblock {\em Social Networks}, 64:225--238.

\bibitem[Wasserman and Pattison, 1996]{wasserman.pattison:p:1996}
Wasserman, S. and Pattison, P.~E. (1996).
\newblock Logit models and logistic regressions for social networks: {I}. an
  introduction to {M}arkov graphs and $p*$.
\newblock {\em Psychometrika}, 60:401--426.

\bibitem[Wasserman and Robins, 2005]{wasserman.robins:ch:2005}
Wasserman, S. and Robins, G.~L. (2005).
\newblock An introduction to random graphs, dependence graphs, and $p*$.
\newblock In Carrington, P.~J., Scott, J., and Wasserman, S., editors, {\em
  Models and Methods in Social Network Analysis}, chapter~10, pages 192--214.
  Cambridge University Press, Cambridge.

\bibitem[Yamagishi and Cook, 1993]{yamagishi.cook:spq:1993}
Yamagishi, T. and Cook, K. (1993).
\newblock Generalized exchange and social dilemmas.
\newblock {\em Social Psychology Quarterly}, 56:235--248.

\bibitem[Yu et~al., 2021]{yu.et.al:siam:2021}
Yu, Y., Grazioli, G., Phillips, N.~E., and Butts, C.~T. (2021).
\newblock Local graph stability in exponential family random graph models.
\newblock {\em SIAM Journal on Applied Mathematics}, 81(4):1389--1415.

\end{thebibliography}

%---Uncomment the commands below if we need to display footnotes as endnotes
%\begingroup
%\parindent 0pt
%\parskip 2ex
%\def\enotesize{\normalsize}
%\theendnotes
%\endgroup

\end{document}